\title[ ]{ Fermi isospectrality for discrete periodic Schr\"odinger operators}
\author{Wencai Liu}
\address[W. Liu]{ Department of Mathematics, Texas A\&M University, College Station, TX 77843-3368, USA} \email{liuwencai1226@gmail.com; wencail@tamu.edu}
\keywords{ Rigidity theorem,  isospectrality, Floquet isospectrality, Fermi variety, Bloch variety, Fermi isospectrality, discrete periodic Schr\"odinger operator, separable function.}
\thanks{{\em 2020 Mathematics Subject Classification.} Primary: 39A12. Secondary: 35P05,  35J10.}
\theoremstyle{plain}
\newtheorem{theorem}{Theorem}[section]
\newtheorem{corollary}[theorem]{Corollary}
\newtheorem{lemma}[theorem]{Lemma}
\newtheorem{remark}{Remark}
\newcommand{\C}{\mathbb{C}}
\newcommand{\Z}{\mathbb{Z}}
\newcommand{\R}{\mathbb{R}}
\theoremstyle{plain}
\newtheorem{definition}{Definition}
\newtheorem{conjecture}{Conjecture}
\begin{document}
	
	
	\begin{abstract}
	Let  $\Gamma=q_1\mathbb{Z}\oplus q_2 \mathbb{Z}\oplus\cdots\oplus q_d\mathbb{Z}$, where	$q_l\in \mathbb{Z}_+$, $l=1,2,\cdots,d$.
		Let $\Delta+V$    be the discrete Schr\"odinger operator, 
		where $\Delta$ is the discrete Laplacian on $\mathbb{Z}^d$ and the potential $V:\mathbb{Z}^d\to \mathbb{R}$ is $\Gamma$-periodic. 
		We prove  three   rigidity theorems for   discrete periodic Schr\"odinger operators in any dimension $d\geq 3$:
	\begin{enumerate}
				\item  	if at some energy level,    Fermi varieties of the $\Gamma$-periodic potential $V$  and the $\Gamma$-periodic potential $Y$  are the same (this feature is referred to as {\it Fermi isospectrality} of $V$ and $Y$), and  $Y $ is a separable   function, then $V$ is separable;
					\item  if   potentials  $V$ and $Y$ are { Fermi isospectral} and both  $V=\bigoplus_{j=1}^rV_j$ and $Y=\bigoplus_{j=1}^r  Y_j$ are separable   functions,  then, up to a constant,  lower dimensional decompositions  $V_j$  and $Y_j$ are Floquet isospectral, $j=1,2,\cdots,r$;
		\item 	if a  potential $V$ and    the zero potential are {Fermi isospectral}, then $V$ is zero.

	\end{enumerate}
In particular, all conclusions in (1), (2) and (3) hold if we replace the assumption  ``Fermi isospectrality" with a stronger assumption ``Floquet isospectrality".

	\end{abstract}
	
	\maketitle 
	\section{Introduction and main results}

	Given $q_l\in \Z_+$, $l=1,2,\cdots,d$,
	let $\Gamma=q_1\Z\oplus q_2 \Z\oplus\cdots\oplus q_d\Z$.
	We say that a function $V: \Z^d\to \R$ is $\Gamma$-periodic if 
	for any $\gamma\in \Gamma$, $V(n+\gamma)=V(n)$. 
We  assume that $q_l$, $l=1,2,\cdots, d$, are relatively prime.  We  regard the zero function  as a $\Gamma$-periodic potential. 
 In this paper, $\Gamma$ is fixed. For simplicity, we sometimes 
		call  a function $V$ periodic instead of $\Gamma$-periodic. 

	Let $\Delta$ be the discrete Laplacian on $\ell^2(\Z^d)$, namely
	\begin{equation*}
	(\Delta u)(n)=\sum_{||n^\prime-n||_1=1}u(n^\prime),
	\end{equation*}
	where $n=(n_1,n_2,\cdots,n_d)\in\Z^d$, $n^\prime=(n_1^\prime,n_2^\prime,\cdots,n_d^\prime)\in\Z^d$ and 
	\begin{equation*}
	||n^\prime-n||_1=\sum_{l=1}^d |n_l-n^\prime_l|.
	\end{equation*}
	Consider the discrete  Schr\"{o}dinger operator on $\ell^2({\Z}^d)$,
	\begin{equation} \label{h0}
	H_0=\Delta +V ,
	\end{equation}
	where $V$ is $\Gamma$-periodic.

	Let $\{\textbf{e}_j\}$, $j=1,2,\cdots d$, be the standard basis in $\Z^d$:
	\begin{equation*}
	\textbf{e}_1 =(1,0,\cdots,0),\textbf{e}_2 =(0,1,0,\cdots,0),\cdots, \textbf{e}_{d}=(0,0,\cdots,0,1).
	\end{equation*}
	Let us consider  the equation
		\begin{equation} 
		(\Delta u)(n)+V(n)u(n)=\lambda u(n) \label{spect_0}, n\in\Z^d,
	\end{equation}
with the so called Floquet-Bloch boundary condition
	\begin{equation}  
	u(n+q_j\textbf{e}_j)=e^{2\pi i k_j}u(n),j=1,2,\cdots,d, \text{ and } n\in \Z^d.\label{Fl}
	\end{equation}

In this paper, we are interested in the  inverse problem of \eqref{spect_0} and \eqref{Fl}. 
Let $D_{V} (k)$ be the periodic operator $\Delta+V$ with the Floquet-Bloch boundary condition \eqref{Fl} (see Section \ref{S2} for the precise description of $D_{V} (k)$).
Two $\Gamma$-periodic potentials $V$ and $Y$ are called 
Floquet isospectral if  
\begin{equation}\label{gfi}
\sigma(D_{V} (k))= \sigma(D_{Y} (k)), \text{ for any } k \in\R^d.
\end{equation}

We say that a function $V$ on $\Z^d$ is $(d_1,d_2,\cdots,d_r)$ separable (or simply separable), where  $\sum_{j=1}^r d_j= d$ with $r\geq 2$, if there exist functions 
$V_j$  on $\Z^{d_j}$, $j=1,2,\cdots,r$,   
such that  for any $(n_1,n_2,\cdots,n_d)\in\Z^d$,
\begin{align}
V(n_1,n_2,\cdots,n_d)=&V_1(n_1,\cdots, n_{d_1})+V_2(n_{d_1+1},n_{d_1+2},\cdots,n_{d_1+d_2})\nonumber\\
&+\cdots+V_r(n_{d_1+d_2+\cdots +d_{r-1}+1},\cdots,n_{d_1+d_2+\cdots +d_r}).\label{g61}
\end{align}

We   say  $V:\Z^d\to \R$ is completely  separable if $V$ is $(1,1,\cdots, 1)$ separable. 
When there is no ambiguity, we write down \eqref{g61} as $V=\bigoplus_{j=1}^r V_j$.

Our interest in this paper is first motivated by several questions asked   by  Eskin, Ralston and Trubowitz \cite{ERT84,ERTII}, and Gordon and Kappeler \cite{gki} : 
\begin{enumerate}
	\item [Q1.]   If  $Y$ and $V$ are Floquet isospectral,  and $Y$ is (completely) separable,  is $V$ (completely) separable?
		\item [Q2.] Assume that both $Y=\bigoplus_{j=1}^dY_j$ and $V=\bigoplus_{j=1}^d V_j$   are completely separable. If $V$ and $Y$ are Floquet   isospectral,   are the one-dimensional potentials $V_j$ and $Y_j$ Floquet  isospectral (up to possible translations)?
\end{enumerate}  
While Q1 and Q2 were   formulated in the continuous case  \cite{ERT84,ERTII,gki,ksurvey}, the  same questions  apply to  the discrete case. See more details in a recent survey article  by Kuchment \cite[Section 5]{ksurvey}.  
For both continuous and discrete periodic Schr\"odinger operators,  Q1 and Q2 have been partially answered   by  Eskin-Ralston-Trubowitz \cite{ERT84,ERTII}, Gordon-Kappeler \cite{gki} and Kappeler \cite{kapiii}.   See  more details in Remark \ref{re1}.

As corollaries of our main results, Q1 and Q2 will be completely  resolved  in the discrete case.  
Another motivation of this paper comes from  the work of    B{\"a}ttig,  Kn\"orrer and Trubowitz  \cite{bktcm91}, where the authors  established several rigidity results based on  the Fermi variety 
when $d=3$.  See item  (a) in Remark \ref{re1}.

 Before stating our results, let us introduce some terminology.

	\begin{definition}{\rm
	Given $\lambda\in \C$,
	the  { \it Fermi surface (variety) } $F_{\lambda}(V)$ consists of all $k=(k_1,k_2,\cdots,k_d)\in \C^{d}$ for which
	there exists a non-zero solution of  ~ \eqref{spect_0} and ~\eqref{Fl}.}
	
\end{definition}

The main goal of this paper is to  study    isospectral problems for given  Fermi varieties. 
\begin{definition}\label{fermiiso}
	{\rm 
	Let $V$ and $Y$ be two $\Gamma$-periodic functions. We say $V$ and $Y$ are  {\it Fermi isospectral} if 
	${F}_{\lambda_0} (V)={F}_{\lambda_0} (Y)$ for some $\lambda_0\in\C$. }
\end{definition}

Motivated by    Q1 and Q2, and the work of   B{\"a}ttig,  Kn\"orrer and Trubowitz  \cite{bktcm91}  we ask two   questions:
\begin{enumerate}
	\item [Q3.]   If  $Y$ and $V$ are  { Fermi isospectral},  and $Y$ is (completely) separable,  is $V$ (completely) separable?
	\item [Q4.] Assume that both $Y=\bigoplus_{j=1}^rY_j$ and $V=\bigoplus_{j=1}^r V_j$   are   separable. If $V$ and $Y$ are   { Fermi isospectral},   are the  lower dimensional potentials $V_j$ and $Y_j$ Floquet  isospectral?
\end{enumerate}  
Compared  to  Q2, question  Q4 allows  more general decompositions, namely, $V_j$ and $Y_j$ are not necessarily   one dimensional. 

 From Remark \ref{re3} in Section \ref{S2}, one can see that 
\begin{enumerate}
	\item  { Floquet isospectrality } of $V$ and $Y$  is equivalent to saying that eigenvalues of equation \eqref{spect_0} and \eqref{Fl} with potentials 
	$V $ and $Y$ are the same, and  { Fermi isospectrality} of $V$ and $Y$ means  products of eigenvalues are the same;
\item Floquet isospectrality of $V$ and $Y$ is equivalent to   $F_{\lambda}(V)=F_{\lambda}(Y)$ for any $\lambda\in\C$. In other words, Bloch varieties of $V$ and $Y$ are the same.  The {\it Bloch variety  }  is defined as follows:
$$B(V)=\{(k,\lambda)\in\C^{d+1}: k\in F_{\lambda} (V)\}.$$
  
\end{enumerate}
	Therefore, the Fermi isospectrality contains much less information than Floquet isospectrality.

 As readers can see above,  the Floquet isospectrality problem ~\eqref{gfi} can be reformulated in terms of 
the Fermi  variety  and Bloch variety. The reformulation allows us  to use  tools from algebraic geometry and complex analysis in multi-variables to investigate the equation ~\eqref{spect_0} with  the boundary condition ~\eqref{Fl}.
Fermi and Bloch varieties also play  significant roles in the study of the spectral theory of   periodic  Schr\"odinger operators,  such as 
the  existence of embedded eigenvalues   ~\cite{kv06cmp,kvcpde20,shi1,IM14,AIM16,liu1,dks}.

The main contribution of this paper is   {establishing}  several rigidity theorems of  Fermi isospectrality  for  discrete periodic Schr\"odinger operators. In particular, we  answer Q3 and Q4  affirmatively   for any dimension $d\geq 3$,  and thus answer  Q1 and Q2 as well.

\begin{theorem}\label{thmmain2}
	Let  $d\geq 3$. Assume that $V$ and $Y$ are Fermi isospectral, and $Y$ is $(d_1,d_2,\cdots,d_r)$ separable, then
	$V$ is $(d_1,d_2,\cdots,d_r)$ separable.
\end{theorem}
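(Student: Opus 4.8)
The plan is to convert the set-theoretic hypothesis $F_{\lambda_0}(V)=F_{\lambda_0}(Y)$ into an identity of Laurent polynomials and then extract separability of $V$ from the algebraic shape that $Y$ imposes. Writing $z_j=e^{2\pi i k_j}$ and letting $D_V(z)$ be the Floquet matrix on the fundamental domain, the level-$\lambda_0$ Fermi variety is the zero set of
\[
P_V(z,\lambda_0):=\det\bigl(D_V(z)-\lambda_0\bigr),
\]
a Laurent polynomial whose degree in each $z_j$ runs from $-N_j$ to $N_j$, where $N_j=\prod_{l\neq j}q_l$. The first step is to record two structural facts that I would verify directly from the determinant expansion: the two extreme coefficients in each $z_j$ (those of $z_j^{\pm N_j}$) are nonzero constants coming only from the Laplacian, since attaining the maximal winding in direction $j$ forces a single permutation that uses every direction-$j$ bond and no diagonal entry; and the potential $V$ enters only the interior coefficients.

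Second, I would pin down the algebraic signature of separability. If $Y=\bigoplus_{j=1}^{r}Y_j$, then $D_Y(z)$ is the tensor sum of the lower-dimensional Floquet operators $D_{Y_j}(z^{(j)})$, so $\operatorname{spec}D_Y(z)=\operatorname{spec}D_{Y_1}(z^{(1)})+\dots+\operatorname{spec}D_{Y_r}(z^{(r)})$ and
\[
P_Y(z,\lambda_0)=\operatorname*{Res}_{\mu_1,\dots,\mu_{r-1}}\Bigl(\det(D_{Y_1}(z^{(1)})-\mu_1),\dots\Bigr)
\]
is an iterated additive resultant of the factor determinants. Geometrically this means $F_{\lambda_0}(Y)=\bigcup_{\mu}\prod_{j}F_{\mu_j}(Y_j)$, the union running over all splittings $\sum_j\mu_j=\lambda_0$, so the coordinate blocks $z^{(1)},\dots,z^{(r)}$ are coupled only through a single scalar parameter. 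Concretely this both constrains the Newton polytope of $P_Y(\cdot,\lambda_0)$ and dictates how it factors into irreducible components.

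The core of the argument is to transport this signature to $V$. Because $F_{\lambda_0}(V)=F_{\lambda_0}(Y)$, the polynomials $P_V(\cdot,\lambda_0)$ and $P_Y(\cdot,\lambda_0)$ have the same reduced zero set; invoking the irreducibility and reducedness theory for Fermi varieties available when $d\ge3$, I would upgrade this to a matching of irreducible components and then to $P_V(z,\lambda_0)=c\,P_Y(z,\lambda_0)$ for a nonzero constant $c$. It then remains to invert the dictionary from $P_V$ back to the Fourier coefficients $\widehat V(\xi)$ of the potential and to show that every \emph{cross-block} coefficient (those $\xi$ supported on more than one of the $r$ blocks) vanishes, which is exactly the assertion that $V$ is $(d_1,\dots,d_r)$ separable. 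I would organize this by induction on $r$, so the essential case is $r=2$, reading off the vanishing of the mixed coefficients from the block-decoupled Newton polytope together with the rigid, Laplacian-only extreme coefficients. The main obstacle lies precisely in this reconstruction step: Fermi isospectrality fixes $P_V$ only at the single energy $\lambda_0$, so classical Floquet rigidity is unavailable, and one must show that the additive-resultant structure at one level—combined with the component-matching that requires $d\ge3$—already pins the cross-block coefficients of $V$ to zero. Handling blocks of dimension larger than one, where the factor determinants are themselves genuinely multivariate and their resultant structure is harder to decode, is the delicate point.
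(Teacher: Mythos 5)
Your opening step is sound and coincides with the paper's Lemma \ref{le1}: since the extreme coefficients $\kappa_j z_j^{\pm Q/q_j}$ are universal (Laplacian-only), irreducibility for $d\geq 3$ upgrades equality of the zero sets at $\lambda_0$ to the exact identity $\mathcal{P}_V(z,\lambda_0)=\mathcal{P}_Y(z,\lambda_0)$, with constant $c=1$. But everything after that is where the actual proof lives, and you have left it out: you yourself label the reconstruction of $\hat{V}$ from the single-energy determinant as ``the main obstacle'' and ``the delicate point'' without supplying any argument for it. Worse, the one concrete mechanism you propose --- reading the vanishing of cross-block Fourier coefficients of $V$ off a ``block-decoupled Newton polytope'' of $\mathcal{P}_Y(\cdot,\lambda_0)$ --- does not work, because the Newton polytope of a separable potential's determinant is \emph{not} block-decoupled. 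The additive coupling through the shared energy produces mixed monomials in abundance: already for the zero potential, $\tilde{\mathcal{P}}_{\bf 0}(z,\lambda)=\prod_{n\in W}\bigl(\sum_j(\rho^j_{n_j}z_j+(\rho^j_{n_j}z_j)^{-1})-\lambda\bigr)$ expands into monomials $z^a$ supported on all blocks simultaneously. Separability is invisible at the level of the support polytope; its true algebraic signature is the vanishing of the cross-block Fourier coefficients $\hat{Y}(l)$ (the paper's Lemma \ref{leSeparable1}), i.e., a statement about specific \emph{interior} coefficients, which your outline never reaches.

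What is missing, concretely, is the quantitative extraction that the paper performs in Theorems \ref{key1}--\ref{key3}. After the substitution $z_j\mapsto z_j^{q_j}$ one expands $\det(A+B_V-\lambda_0 I)$ and isolates the homogeneous parts of degree $Q-1$ and $Q-2$ in $z$; the first gives $[V]=[Y]$, and the second gives the rational-function identity \eqref{g55} whose numerators are exactly $|\hat{V}(n-n')|^2$. One then takes limits $z\to z^0$ along points where prescribed linear forms $\sum_j \rho^j_{n_j}z_j$ vanish, and the whole difficulty is certifying that \emph{only} the intended forms vanish --- this is the roots-of-unity independence statement (Lemma \ref{key2}), which is where the pairwise coprimality of $q_1,\dots,q_d$ and the hypothesis $d\geq 3$ enter. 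The upshot is the identity of Theorem \ref{key3}, equating sums of $|\hat{V}(l')|^2$ and $|\hat{Y}(l')|^2$ over cross-block index sets; separability of $Y$ annihilates the right-hand side, and positivity forces each $\hat{V}(l)=0$. Note that positivity of the squared moduli is essential here and is entirely absent from your Newton-polytope scheme. Since Fermi isospectrality supplies no spectral data for individual $D_V(z)$ (only the characteristic product at one energy), no classical inverse-spectral or resultant-factorization argument substitutes for this second-order coefficient analysis; as written, your proposal has a genuine gap precisely at its central step.
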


We say that two functions $G_1$ and $G_2$  are Floquet isospectral up to a constant if there exists a constant $C$ such that $G_1+C$ and $G_2$ are Floquet isospectral.
\begin{theorem}\label{thmmain3}
	Let $d\geq 3$.
	Assume  that potentials  $V$ and $Y$ are  Fermi isospectral,   and  both  $V=\bigoplus_{j=1}^rV_j$ and $Y=\bigoplus_{j=1}^r  Y_j$ are separable   functions.  Then, up to a constant,  lower dimensional decompositions  $V_j$  and $Y_j$ are Floquet isospectral, $j=1,2,\cdots,r$.
\end{theorem}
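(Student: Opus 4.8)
The plan is to turn the single-energy Fermi data into full Bloch-variety (Floquet) data for each factor, exploiting that separability makes the Floquet operator a Kronecker sum. Put $z_l=e^{2\pi i k_l}$ and split the coordinates of $z=(z^{(1)},\dots,z^{(r)})$ according to the decomposition $V=\bigoplus_{j=1}^r V_j$. Since $\Delta=\sum_l\Delta_l$ and each $V_j$ depends only on its own block, the Floquet matrix $D_V(k)$ is the Kronecker sum of the block matrices $D_{V_j}(k^{(j)})$, so its eigenvalues are exactly the sums $\sum_{j=1}^r\mu^{(j)}_{i_j}(z^{(j)})$ over all index tuples, where $\mu^{(j)}_1,\dots,\mu^{(j)}_{Q_j}$ are the Floquet eigenvalue branches of block $j$ and $Q_j=\prod_{l\in\text{block }j}q_l$. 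Consequently
\[
\det(\lambda I-D_V(k))=\prod_{i_1,\dots,i_r}\Big(\lambda-\sum_{j=1}^r\mu^{(j)}_{i_j}(z^{(j)})\Big),
\]
and by Remark \ref{re3} Fermi isospectrality of $V$ and $Y$ is precisely the identity of these Laurent polynomials at $\lambda=\lambda_0$, namely $\det(\lambda_0 I-D_V(k))=\det(\lambda_0 I-D_Y(k))$ for every $k$.

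First I would isolate one block, say block $m$, by freezing the quasimomenta $z^{(j)}$ of all other blocks at a generic point. Writing a frozen sum of the other branches as a number $\beta=\sum_{j\neq m}\mu^{(j)}_{i_j}$, the corresponding slice of $F_{\lambda_0}(V)$ is a union $\bigcup F^{(m)}_{\lambda_0-\beta}(V_m)$ of Fermi varieties of block $m$ at the shifted energies $\lambda_0-\beta$, where $F^{(m)}_{\eta}(V_m)$ denotes the block-$m$ Fermi variety at energy $\eta$. The decisive point is that as the frozen quasimomenta range over their torus the values $\beta$ vary continuously and sweep out a full continuum of energies; matching the slices of $F_{\lambda_0}(V)$ and $F_{\lambda_0}(Y)$ for all frozen configurations therefore reconstructs the entire Bloch variety $B(V_m)$ from a continuum of its slices. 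This is the mechanism that upgrades one energy level of the coupled system into the full Floquet data of each factor.

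To make the matching of these unions rigorous I would invoke the irreducibility of the Bloch variety (and of the generic Fermi variety) of an individual block --- the discrete analogue of the Kn\"orrer--Trubowitz and B\"attig results, available here because the $q_l$ are relatively prime and $d\ge3$ forces the relevant block to be genuinely multidimensional. Irreducibility together with uniqueness of the decomposition into irreducible components lets me pair each component $F^{(m)}_{\lambda_0-\beta}(V_m)$ with a unique component $F^{(m)}_{\lambda_0-\delta}(Y_m)$, so that the freezing parameter induces an analytic (indeed algebraic) correspondence between the two families of energies. A continuity/monodromy argument along the scanning continuum then forces this correspondence to be a single translation $\eta\mapsto\eta+C_m$, whence $B(V_m)$ and $B(Y_m)$ coincide after an energy shift by the constant $C_m$; equivalently $V_m$ and $Y_m$ are Floquet isospectral up to $C_m$. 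Mutual consistency of the constants follows by comparing the traces $\sum_i\mu^{(j)}_i$, which encode the block potential averages and are preserved by the identity of the first paragraph. The case of general $r$ reduces to two factors by grouping blocks $\{2,\dots,r\}$ together.

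The hard part will be exactly the disentangling in the previous paragraph: a single frozen configuration yields a \emph{union} of several block-$m$ Fermi varieties, and one must show that the component-by-component matching across the scanning continuum is governed by one global translation rather than a branch-dependent shift. This is where irreducibility of the block varieties and a careful continuity (monodromy) argument are indispensable, and where the dimension hypothesis $d\ge3$ enters; by contrast, one-dimensional blocks must be treated by the classical one-variable theory. Once this rigidity of the energy correspondence is established, the desired conclusion that each $V_j$ and $Y_j$ are Floquet isospectral up to a constant follows.
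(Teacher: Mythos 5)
Your reduction to two grouped blocks and your use of the Kronecker-sum structure (the product formula for $\det(\lambda I-D_V(k))$ over branch sums, which is the paper's \eqref{gj82}--\eqref{gj83}) are sound, but the core of your argument has a genuine gap, and it sits exactly where you say the hard part is. When you freeze the quasimomenta of the complementary blocks, the slice of $F_{\lambda_0}(V)$ is $\bigcup_{\beta}F^{(m)}_{\lambda_0-\beta}(V_m)$ and the slice of $F_{\lambda_0}(Y)$ is $\bigcup_{\delta}F^{(m)}_{\lambda_0-\delta}(Y_m)$, where the shift sets $\{\beta\}$ and $\{\delta\}$ come from the \emph{frozen} blocks of $V$ and of $Y$ respectively; these sets are a priori unrelated, since nothing is yet known about the other blocks. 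To pair components you need irreducibility of the block Fermi varieties $F^{(m)}_{\eta}(V_m)$, and this is precisely where the plan breaks: for a one-dimensional block the ``Fermi variety'' is a finite set of points in $\C^\star$, equality of zero sets neither determines the polynomial nor permits any component-by-component matching, and for a two-dimensional block Theorem \ref{thm1} allows reducibility at $\eta=[V_m]$. Since the completely separable case $d_1=\cdots=d_r=1$ (Corollary \ref{coro2}, i.e.\ Q2) is the flagship instance of the theorem, your method fails wholesale in the main case of interest, not merely in a degenerate corner. Moreover, even where irreducibility holds ($d_m\geq 3$), your ``continuity/monodromy argument'' forcing a single translation $\eta\mapsto\eta+C_m$ is asserted, not proved; monodromy of the branches $\mu^{(j)}_{i_j}$ permutes the shifts $\beta$ and is an \emph{obstruction} to single-valuedness rather than a mechanism for it. (A concrete substitute exists: once $\mathcal{P}_{V_m}(z,\eta)=\mathcal{P}_{Y_m}(z,\eta')$ is known as an identity of Laurent polynomials in $z$, comparing the subleading coefficient as in \eqref{equ2} yields $\eta'-\eta=[Y_m]-[V_m]$, a constant --- but you never reach that identity for low-dimensional blocks.)

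The paper's proof avoids slicing at frozen quasimomenta altogether. It arranges the two-block reduction as $(d_1,d_2)$ with $d_1\geq 2$, constrains $z_1$ to the explicit curve $\sum_{j=1}^{d_1}\bigl(z_j+z_j^{-1}\bigr)=\lambda_0-\lambda$ in \eqref{equ32}, and lets $|z_2|,\dots,|z_{d_1}|\to\infty$: the single-energy identity $\tilde{\mathcal{P}}_V(z,\lambda_0)=\tilde{\mathcal{P}}_Y(z,\lambda_0)$ then has leading asymptotics $\tilde{\mathcal{P}}_{V_2}(\tilde z_2,\lambda)\,h(\hat z_1)$, which extracts the full Floquet data of $V_2$ at \emph{every} energy $\lambda$ directly --- no component matching, no irreducibility of the small block needed. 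The remaining block (the only delicate case being $d_2=1$) is then handled by your product formula together with unique factorization, but crucially using irreducibility of the \emph{complementary} $(d-1)$-dimensional polynomial $\mathcal{P}_{V_1}$ (Theorems \ref{thm2} and \ref{thm1}), with the branch asymptotics $\lambda^l(z_d)=e^{2\pi i l/q_d}z_d+O(1)$ and coprimality of the $q_j$ ruling out a branch permutation ($l_0=1$ in \eqref{gj84}). In short: your skeleton overlaps with the paper's second half, but the first half of your plan needs the leading-asymptotics mechanism (or some replacement) to survive one- and two-dimensional blocks, and as written it does not.
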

 
Theorems \ref{thmmain2} and \ref{thmmain3} imply  
\begin{theorem}\label{mainthm}
	Let  $d\geq 3$. Assume that $V$ and $Y$ are Fermi isospectral, and $Y=\bigoplus _{j=1}^r Y_j$ is    separable. Then $V=\bigoplus _{j=1}^r V_j$ is   separable.   Moreover, up to a constant,
	$Y_j$ and 	$V_j$, $j=1,2,\cdots,r$,   are Floquet isospectral.
\end{theorem}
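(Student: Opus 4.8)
The plan is to derive Theorem \ref{mainthm} as a direct consequence of the two preceding theorems, chaining them in sequence. The only content of the statement beyond what is already packaged in Theorems \ref{thmmain2} and \ref{thmmain3} is the observation that the separable structure extracted by the first theorem is precisely the input required by the second, so the whole argument is a formal concatenation.

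First I would invoke Theorem \ref{thmmain2}. By hypothesis $V$ and $Y$ are Fermi isospectral and $Y$ is $(d_1,d_2,\cdots,d_r)$ separable, so Theorem \ref{thmmain2} yields that $V$ is itself $(d_1,d_2,\cdots,d_r)$ separable. In particular there exist functions $V_j$ on $\Z^{d_j}$, $j=1,2,\cdots,r$, with $V=\bigoplus_{j=1}^r V_j$, where the decomposition is taken with respect to the same partition of the coordinates $(n_1,\cdots,n_d)$ into consecutive blocks of sizes $d_1,\cdots,d_r$ that is used for $Y=\bigoplus_{j=1}^r Y_j$ in \eqref{g61}. This establishes the first conclusion of the theorem and, crucially, aligns the blocks of $V$ and $Y$ coordinatewise.

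With this decomposition in hand, both $V=\bigoplus_{j=1}^r V_j$ and $Y=\bigoplus_{j=1}^r Y_j$ are separable functions with the same decomposition pattern, and $V$ and $Y$ remain Fermi isospectral by hypothesis. Thus the hypotheses of Theorem \ref{thmmain3} are met, and applying it gives that, up to a constant, the lower dimensional pieces $V_j$ and $Y_j$ are Floquet isospectral for each $j=1,2,\cdots,r$. This is exactly the second conclusion, completing the proof.

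Since the argument is purely a chaining of the two theorems, there is no real obstacle internal to this proof; the substantive work is carried out in Theorems \ref{thmmain2} and \ref{thmmain3}. The one point that requires a moment of care is verifying that the separability produced by Theorem \ref{thmmain2} is of the \emph{same} type $(d_1,\cdots,d_r)$ as that of $Y$ --- so that the pieces $V_j$ and $Y_j$ live on the same $\Z^{d_j}$ and can be compared --- rather than merely being separable with respect to some other partition of the coordinates. Once that matching is noted, Theorem \ref{thmmain3} applies verbatim.
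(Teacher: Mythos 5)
Your proposal is correct and matches the paper exactly: the paper derives Theorem \ref{mainthm} as an immediate consequence of chaining Theorem \ref{thmmain2} (to get that $V$ is $(d_1,\cdots,d_r)$ separable with the same block structure as $Y$) with Theorem \ref{thmmain3} (to get Floquet isospectrality of $V_j$ and $Y_j$ up to a constant). Your remark about matching the partition type is the right point of care, and nothing further is needed.
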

As a corollary of Theorem \ref{mainthm}, we have
\begin{corollary}\label{coro2}
	Let  $d\geq 3$. Assume that $V$ and $Y$ are Fermi isospectral, and $Y=\bigoplus _{j=1}^d Y_j$ is completely  separable. Then $V=\bigoplus _{j=1}^d V_j$ is completely separable.   Moreover, up to a constant,
$Y_j$ and 	$V_j$, $j=1,2,\cdots,d$,   are Floquet isospectral.
\end{corollary}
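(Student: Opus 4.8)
The plan is to recognize that Corollary \ref{coro2} is nothing more than the finest-decomposition instance of Theorem \ref{mainthm}, so the entire argument consists of correctly instantiating the parameters of that theorem. By the definition given in the introduction, $Y$ being completely separable means precisely that $Y$ is $(1,1,\dots,1)$ separable, i.e. $Y=\bigoplus_{j=1}^d Y_j$ where the number of blocks is $r=d$ and every block dimension equals $d_j=1$. Since $d\geq 3\geq 2$, this is a bona fide separable decomposition in the sense required by Theorem \ref{mainthm}, so that theorem applies verbatim.

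First I would invoke Theorem \ref{mainthm} with the specific choice $(d_1,d_2,\dots,d_r)=(1,1,\dots,1)$ and $r=d$. Under the standing hypotheses that $V$ and $Y$ are Fermi isospectral and that $Y=\bigoplus_{j=1}^d Y_j$ is separable with this decomposition, Theorem \ref{mainthm} yields that $V$ is $(1,1,\dots,1)$ separable, that is, $V=\bigoplus_{j=1}^d V_j$ with each factor $V_j$ a function of a single variable. By definition this is exactly the assertion that $V$ is completely separable, establishing the first conclusion of the corollary.

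Finally, the ``moreover'' clause of Theorem \ref{mainthm} gives that, up to an additive constant, the lower-dimensional factors $Y_j$ and $V_j$ are Floquet isospectral for each $j=1,2,\dots,d$; because we have chosen $d_j=1$, these factors are one-dimensional potentials, which is precisely the second assertion of the corollary. Thus no new idea is needed and there is no substantive obstacle: the only point requiring care is the bookkeeping check that complete separability is a special case of separability (which holds as soon as $d\geq 2$, guaranteed here by $d\geq 3$), after which the statement follows by direct specialization of Theorem \ref{mainthm}.
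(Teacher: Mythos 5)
Your proposal is correct and matches the paper exactly: the paper states Corollary \ref{coro2} as an immediate consequence of Theorem \ref{mainthm}, obtained precisely by specializing to the decomposition $(d_1,\dots,d_r)=(1,1,\dots,1)$ with $r=d$. Your instantiation and the bookkeeping check that complete separability is the $(1,1,\dots,1)$ case of separability are all that is needed.
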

Theorem \ref{mainthm} allows us to establish  two Ambarzumian-type  theorems.
We say that a  function  $V$ on $\Z^d$ only depends on the first $\tilde{d}$ variables with $\tilde{d}<d$ if there exists a function $\tilde{V}$ on $\Z^{\tilde{d}}$ such that  for any $(n_1,n_2,\cdots,n_d)\in\Z^d$, $$V(n_1,n_2,\cdots,n_{\tilde d},n_{\tilde {d}+1}\cdots,n_d)=\tilde{V}(n_1,n_2,\cdots,n_{\tilde{d}}).$$
\begin{theorem}\label{thmmain5}
	
	Let $d\geq 3$.   
	Assume   that $V$  and    $Y$  are Fermi isospectral, and  $Y$  only depends on the first $\tilde{d}$ variables with $\tilde{d}<d$.
	Then
	$V$  only depends on the first $\tilde{d}$ variables.
\end{theorem}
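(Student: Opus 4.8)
The plan is to recognize the hypothesis on $Y$ as a degenerate case of separability and then feed it into Theorem~\ref{mainthm}. Since $Y$ depends only on the first $\tilde{d}$ variables, I would write
$$Y=\tilde{Y}\oplus\underbrace{0\oplus 0\oplus\cdots\oplus 0}_{d-\tilde{d}},$$
so that $Y=\bigoplus_{j=1}^{r}Y_j$ is $(\tilde{d},1,1,\ldots,1)$ separable with $r=d-\tilde{d}+1$, $Y_1=\tilde{Y}$ a function on $\Z^{\tilde{d}}$, and $Y_2=\cdots=Y_r$ equal to the zero function on $\Z$. This is a genuine separable decomposition because $1\le\tilde{d}<d$ gives $r\ge 2$. (If $\tilde{d}=0$, i.e.\ $Y$ is constant, one runs the same argument with the completely separable decomposition $(1,1,\ldots,1)$ and concludes that $V$ is constant.)

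Applying Theorem~\ref{mainthm} with this decomposition, $V$ is also $(\tilde{d},1,1,\ldots,1)$ separable, say $V=\bigoplus_{j=1}^{r}V_j$ with $V_1$ on $\Z^{\tilde{d}}$ and $V_2,\ldots,V_r$ on $\Z$, and moreover each pair $V_j,Y_j$ is Floquet isospectral up to a constant. For $j=2,\ldots,r$ this says there is a constant $C_j$ so that the one-dimensional potential $V_j+C_j$ is Floquet isospectral to the zero potential $Y_j=0$ on $\Z$. The whole theorem therefore reduces to the following one-dimensional rigidity statement: \emph{if a periodic potential $W$ on $\Z$ is Floquet isospectral to the zero potential, then $W\equiv 0$.} Granting this, each $V_j+C_j\equiv 0$ for $j\ge 2$, hence $V_j\equiv -C_j$ is constant, and consequently $V(n)=V_1(n_1,\ldots,n_{\tilde{d}})-\sum_{j=2}^{r}C_j$ depends only on the first $\tilde{d}$ variables, as desired.

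The one-dimensional statement is the only point requiring genuine inverse-spectral input, and I expect it to be the main obstacle, since everything above is a mechanical reduction to Theorem~\ref{mainthm}. In dimension one the Bloch variety of a period-$q$ potential is cut out by $p_W(\lambda)=2\cos(2\pi k)$, where $p_W$ is the monic Floquet discriminant of degree $q$; hence Floquet isospectrality to the zero potential is equivalent to the polynomial identity $p_W=p_0$. The free discriminant $p_0$ has all its spectral gaps closed, and by the classical description of the isospectral set of periodic Jacobi matrices with a prescribed discriminant as a torus of dimension equal to the number of open gaps, an all-gaps-closed discriminant is attained by a unique operator, necessarily the free Laplacian. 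This forces $W\equiv 0$ and completes the argument. (Alternatively, one may cite a discrete Borg-type uniqueness theorem directly.)
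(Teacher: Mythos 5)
Your proposal is correct, and its opening move coincides with the paper's: both view $Y$ as a separable function with zero components and feed the hypothesis into Theorem \ref{mainthm}. The divergence is in the choice of decomposition and in the final rigidity step. The paper uses the two-part decomposition $Y=\tilde Y\oplus{\bf 0}$ with ${\bf 0}$ the zero function on $\Z^{d-\tilde d}$, so it is left with a single claim: a $(d-\tilde d)$-dimensional potential that is Floquet isospectral (up to a constant) to zero must vanish. It proves this on the spot and self-containedly: by Remark \ref{re3} one has $\tilde{\mathcal{P}}_{V_2}(z,\lambda)=\tilde{\mathcal{P}}_{\bf 0}(z,\lambda)$, and comparing the coefficients of $\lambda^{Q_1-1}$ and $\lambda^{Q_1-2}$, where $Q_1=q_{\tilde d+1}\cdots q_d$ (via Lemma \ref{lesep}, exactly as in \eqref{g2} and \eqref{g3}), yields $[V_2]=0$ and $\sum_l|\hat V_2(l)|^2=0$, hence $V_2\equiv 0$; note this computation automatically absorbs the additive constant, which you handle explicitly through the $C_j$'s. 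You instead split into $d-\tilde d$ one-dimensional pieces via the $(\tilde d,1,\dots,1)$ decomposition and dispatch each piece by classical inverse spectral theory for periodic Jacobi matrices (all gaps closed implies the isospectral torus is a point, hence the free operator). That citation is legitimate -- a discrete Borg/Hochstadt-type uniqueness theorem does hold, and restricting Jacobi uniqueness to the Schr\"odinger subclass $a_n\equiv 1$ is harmless -- but it imports genuinely nontrivial external machinery where a two-line trace identity suffices: your own identity $p_W=p_0$ already gives $\sum_n W(n)=0$ (coefficient of $\lambda^{q-1}$) and then $\sum_n W(n)^2=0$ (coefficient of $\lambda^{q-2}$), which is precisely the paper's argument specialized to one dimension and would make your proof self-contained. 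A final small point: your parenthetical $\tilde d=0$ case lies outside the theorem as the paper frames it (the definition of ``only depends on the first $\tilde d$ variables'' presumes $\tilde V$ on $\Z^{\tilde d}$ with $\tilde d\geq 1$), but your fallback via complete separability is consistent with Theorem \ref{thmmain}.
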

\begin{theorem}\label{thmmain}
	
	Let $d\geq 3$.  
	Assume   that $V$  and   the zero potential  are Fermi isospectral. 
	Then
	$V$ is zero.
\end{theorem}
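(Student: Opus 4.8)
The plan is to obtain Theorem~\ref{thmmain} as a consequence of Corollary~\ref{coro2} (complete separability) followed by a one-dimensional Ambarzumian argument and a final computation pinning down an additive constant.

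First I would observe that the zero potential is completely separable, $0=\bigoplus_{j=1}^{d}0_j$ with each $0_j$ the zero function on $\Z$. Since $V$ and $0$ are Fermi isospectral, Corollary~\ref{coro2} immediately gives that $V=\bigoplus_{j=1}^{d}V_j$ is completely separable and that, for each $j$, there is a constant $C_j$ such that the one-dimensional $q_j$-periodic potential $V_j+C_j$ is Floquet isospectral to the one-dimensional zero potential. (Alternatively one could reach the same point by applying Theorem~\ref{thmmain5} in each coordinate, since $0$ depends only on the $j$-th variable for every $j$.)

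Next I would settle the one-dimensional step: a $q$-periodic $W\colon\Z\to\R$ that is Floquet isospectral to $0$ must vanish. Writing the Floquet matrix $D_W(k)$ as the $q\times q$ cyclic matrix whose diagonal is $(W(1),\dots,W(q))$ and whose off-diagonal entries come only from $\Delta$ (hence are independent of $W$), the quantity $\operatorname{tr}D_W(k)^2$ differs from $\operatorname{tr}D_0(k)^2$ only through the diagonal, so Floquet isospectrality forces $\sum_{n=1}^{q}W(n)^2=0$; as $W$ is real this gives $W\equiv 0$. Applying this with $W=V_j+C_j$ shows $V_j\equiv -C_j$ for every $j$, and therefore $V\equiv a$ for some real constant $a$.

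It remains to prove $a=0$, which I expect to be the main obstacle: Fermi isospectrality only records the variety at the single, and possibly complex, energy $\lambda_0$, so nothing yet distinguishes $a$ from $0$. Since $D_a(k)=D_0(k)+aI$, one has $\lambda_0\in\sigma(D_a(k))$ iff $\lambda_0-a\in\sigma(D_0(k))$, whence $F_{\lambda_0}(V)=F_{\lambda_0}(a)=F_{\lambda_0-a}(0)$; the hypothesis $F_{\lambda_0}(V)=F_{\lambda_0}(0)$ thus reduces everything to the implication $F_{\lambda_0-a}(0)=F_{\lambda_0}(0)\Rightarrow a=0$, i.e.\ to injectivity of $\mu\mapsto F_\mu(0)$ for the free operator. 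To establish this I would use that the eigenvalue branches of $D_0(k)$ are $\mathcal E_{(m)}(k)=2\sum_{l=1}^{d}\cos\frac{2\pi(k_l+m_l)}{q_l}$, $0\le m_l<q_l$, so that $F_\mu(0)=\bigcup_{m}\{\mathcal E_{(m)}=\mu\}$. If $F_{\lambda_0-a}(0)=F_{\lambda_0}(0)$ with $a\neq0$, then the hypersurface $\{\mathcal E_{(0)}=\lambda_0\}$ would be contained in some $\{\mathcal E_{(m^\ast)}=\lambda_0-a\}$, forcing the non-constant Laurent polynomial $\mathcal E_{(m^\ast)}-\mathcal E_{(0)}$ to equal the constant $-a$ on that hypersurface. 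Comparing the $w_l=e^{2\pi i k_l/q_l}$ monomial structure—both branches being of degree one in each $w_l$—and invoking that the $q_l$ are relatively prime rules this out unless $a=0$. This yields $a=0$ and hence $V\equiv 0$. The delicate points I anticipate are the irreducibility/structure argument on the level hypersurface and the handling of complex $\lambda_0$, for which the relative primality of the $q_l$ is essential.
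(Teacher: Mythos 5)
Your proposal is correct in substance but takes a genuinely different route in its second half. The paper's own proof, after invoking Corollary \ref{coro2} exactly as you do (via complete separability), never leaves Fourier space: Lemma \ref{leSeparable1} kills every coefficient $\hat V(l)$ with at least two nonzero entries, and then Theorem \ref{key4} applied with $Y={\bf 0}$ and $d_1=2$ gives $\sum_{l'\in W,\; l'_1=l'_2=0}|\hat V(l')|^2=0$, which annihilates all remaining coefficients at once --- including $\hat V(0)=[V]$, so the constant ambiguity you fight in your third step simply never arises. You instead pass through the ``Floquet isospectral up to a constant'' conclusion, a one-dimensional Ambarzumian step, and an injectivity statement for $\mu\mapsto F_\mu({\bf 0})$. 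Your 1D trace argument is complete and correct (the hopping part of $D_W(k)$ has zero diagonal for $q\ge 2$, and $q=1$ is trivial from the trace alone). The final step is the one place where you sketch rather than prove: the containment of $\{\mathcal{E}_{(0)}=\lambda_0\}$ in a single branch needs irreducibility of that hypersurface, and the argument should be run in the torus variables $w_l=e^{2\pi i k_l/q_l}$ rather than in $k$-space. Concretely, via Lemma \ref{le1} the hypothesis becomes the Laurent polynomial identity $\prod_{m\in W}\bigl(\mathcal{E}_m(w)-\lambda_0+a\bigr)=\prod_{m\in W}\bigl(\mathcal{E}_m(w)-\lambda_0\bigr)$, where each factor is (up to rotating the variables by roots of unity) the free-operator polynomial of Theorem \ref{thm2} with all periods equal to $1$, hence irreducible for $d\ge 3$; unique factorization plus the coefficient comparison you indicate --- the unit must be a constant $c$ with $\rho^l_{m^*_l}=c=c^{-1}$ for all $l$, and $c=-1$ is impossible because the coprime $q_l$ cannot all be even --- then forces $m^*=0$ and $a=0$. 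So your sketch is repairable exactly along the lines you anticipate. Worth noting, however, that the entire third step can be bypassed with tools already in the paper: equation \eqref{g21} of Theorem \ref{key1} says Fermi isospectrality alone forces $[V]=[Y]$, hence $a=[V]=0$ in one line. What your route buys is a self-contained, elementary argument after Corollary \ref{coro2}; what the paper's buys is brevity and uniformity, since the single Fourier mechanism of Theorem \ref{key4} disposes of the separability residue and the constant simultaneously.
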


\begin{remark}\label{re00}
For discrete periodic Schr\"odinger operators, Floquet isospectrality implies Fermi isospectrality (see Remark \ref{re3}),  so all results in this paper hold if we replace the assumption ``Fermi isospectrality  of $Y$ and $V$" with ``Floquet isospectrality of $Y$ and $V$".
\end{remark}
\begin{remark}\label{re0}
	Theorem \ref{thmmain}    immediately follows from  Theorem \ref{thmmain5}.
	We choose to present    a direct  proof to better illustrate one of the main ideas in this paper.
\end{remark}
 The study of    isospectral problems   of periodic Schr\"odinger operators is a fascinating and vast subject that can not be reviewed entirely here. 
We only list  a few  of the most related results: 
\begin{remark}\label{re1}
	 \begin{enumerate}
	 	\item[(a)]	For continuous  periodic Schr\"odinger operators,  B{\"a}ttig,  Kn\"orrer and Trubowitz   proved   Theorems   \ref{thmmain2}  and \ref{thmmain}   when $d=3$  \cite{bktcm91}. Their proof is based on the  directional compactification, which differs  from our approach. 
		\item[(b)] 
	For the discrete case, 
	 Kappeler proved that if   $V$ and $Y$ are Floquet isospectral,  and $Y$ is completely separable,  then $V$ is completely separable \cite{kapiii}.  Kappeler also answered Q2 affirmatively \cite{kapiii}. 
		\item[(c)] For the continuous case,  Eskin, Ralston and Trubowitz  \cite{ERT84,ERTII} proved that  if $V$ and $Y$ are   Floquet isospectral, and $Y$ is completely  separable, then
		$V$ is completely separable.   
		\item[(d)] For the continuous case,    Gordon  and Kappeler \cite{gki} proved that  if $V$ and $Y$ are   Floquet isospectral, and $Y$ is   separable, then
		$V$ is  separable.   
			\item[(e)]   For the continuous case in dimensions two and three,  Gordon and  Kappeler  \cite{gki} answered Q2 affirmatively.
	\end{enumerate}
\end{remark}
The statements in Remark \ref{re1}  hold under  some assumptions on  lattices and functions.  
We refer readers to
~\cite{ksurvey,ERT84,MT76,kapiii,Kapi,Kapii,ERTII,gki,wa,gkii,gui90,eskin89} and references therein for precise descriptions of those  assumptions  and    other related developments.

 Finally,
 we want to  present some ideas of the proof. 
It is known (see Section \ref{S2})  that in appropriate coordinates the Fermi variety is algebraic. Moreover, it is defined as the set of zeros of a Laurent polynomial  (see Lemma  \ref{le1}). 
 This is our starting point.  Except for   Lemma \ref{le1},   proofs in this paper  are entirely self-contained and 
the approaches  are  new. 
  
 Our strategy is to focus  on the study of the Laurent polynomial.
 One of the ingredients  in the proof is to understand the linear independence of three multiplicative   groups consisting of $a$th, $b$th and $c$th ($a$, $b$ and $c$ are relatively prime) roots of unity (see Lemma \ref{key2}). 
 Another ingredient is to find  appropriate algebraic curves  so that    the  leading (Laurent) polynomials can be   explicitly calculated   along them. 
 
 	The rest of this paper is organized as follows.  
 	 In Section \ref{S2}, we  recall some basics    concerning  the Fermi variety and the discrete Floquet-Bloch transform. 
 	In Section \ref{S3},  we provide several general technical lemmas, which are independent of   isospectral problems. 
 	In Section \ref{S4}, we    provide several  other technical results related to Fermi isospectral problems. 
 	Sections \ref{S5}, \ref{S6} and \ref{S7}  are  devoted to proving  Theorems \ref{thmmain2},  \ref{thmmain3}, and  \ref{thmmain5} 
 and	  \ref{thmmain}    respectively.

		\section{ Basics }\label{S2}
	
	In this section, we first recall some basic facts about the Fermi variety, see, e.g., \cite{liu1,ksurvey,liujmp22}.

	Let $\C^{\star}=\C\backslash \{0\}$ and $z=(z_1,z_2,\cdots,z_d)$. 
For any $z\in (\C^\star)^d$, consider the  equation
\begin{equation}\label{gei1}
 (\Delta u)(n)+V(n)u(n)=\lambda u (n),n\in\Z^d,
\end{equation} 
with boundary conditions
\begin{equation}\label{gei2}
u(n+q_j\textbf{e}_j)=z_j u(n), j=1,2,\cdots, d, \text{ and } n\in \Z^d.
\end{equation}

Now we 
introduce a fundamental domain $W$ for $\Gamma=q_1\Z\oplus q_2\Z\oplus \cdots \oplus q_d\Z$:
\begin{equation*}
W=\{n=(n_1,n_2,\cdots,n_d)\in\Z^d: 0\leq n_j\leq q_{j}-1, j=1,2,\cdots, d\}.
\end{equation*}
By writing out $H_0=\Delta +V$ as acting on the $Q=q_1q_2\cdots q_d$ dimensional space $\{u(n),n\in W\}$, the equation \eqref{gei1} with boundary conditions \eqref{gei2} 
translates into the eigenvalue problem for a $Q\times Q$ matrix $\mathcal{D}_V(z)$. 
We refer readers to   \cite[ Sections 2 and 3]{GKTBook} and  \cite[Section 2]{pete} 
	for  more detailed  descriptions  of  $\mathcal{D}_V(z)$. 

Let  
$\mathcal{P}_V(z,\lambda)$ be the determinant of 
$\mathcal{D}_V(z)-\lambda I$.  
Let $z_j=e^{2\pi i k_j}, j=1,2,\cdots, d$, $ D_V(k)=\mathcal{D}_V(z)$ and $P_V(k,\lambda)=\mathcal{P}_V(z,\lambda)$. 
 Therefore, the Fermi variety $F_{\lambda}(V)$ 
is determined by 
\begin{equation}\label{gfer}
F_{\lambda}(V)=\{k\in \C^d: {P}_{V} (k,\lambda ) =0 \}.
\end{equation}
For each $k\in\R^d$, it is easy to see that  $D_V(k)$ has $Q=q_1q_2\cdots q_d$ eigenvalues. Order them in non-decreasing  order
\begin{equation*}
\lambda^1_V(k)\leq \lambda^2_V(k)\leq\cdots \leq \lambda^Q_V(k).
\end{equation*}
Therefore,
\begin{equation}\label{g60}
P_V(k,\lambda)=\prod_{m=1}^Q(\lambda^m_V(k)-\lambda).
\end{equation}

One can see that 
$\mathcal{P}_V(z,\lambda) $ is a polynomial in the variables $\lambda$ and 
$z_1,z_1^{-1},\cdots, z_d,z_d^{-1},$    with highest degrees terms,
$\kappa_1z_1^{ \frac{Q}{q_1}}, \kappa_1 z_1^{- \frac{Q}{q_1}}, \kappa_2z_2^{ \frac{Q}{q_2}},\kappa_2z_2^{- \frac{Q}{q_2}}, \cdots ,\kappa_dz_d^{ \frac{Q}{q_d}},\kappa_d z_d^{- \frac{Q}{q_d}}$ and $(-1)^Q\lambda^{Q}$, where 
\begin{equation*}
\kappa_j=(-1)^{\frac{q_j-1}{q_j}Q}\in\{-1,1\}, j=1,2,\cdots,d.
\end{equation*}
In other words $\mathcal{P}_V(z,\lambda) $ is a Laurent polynomial of $\lambda$ and 
$z_1, z_2,\cdots, z_d$, and
\begin{equation}\label{g37}
\mathcal{P}_V(z,\lambda) = \kappa_1z_1^{ \frac{Q}{q_1}}+\kappa_1 z_1^{- \frac{Q}{q_1}}+\kappa_2z_2^{ \frac{Q}{q_2}}+\kappa_2z_2^{- \frac{Q}{q_2}}+ \cdots +\kappa_dz_d^{ \frac{Q}{q_d}}+ \kappa_dz_d^{- \frac{Q}{q_d}} +(-1)^Q\lambda^{Q}+\cdots.
\end{equation}

A  single term Laurent polynomial, i.e., $Cz_1^{a_1}z_2^{a_2}\cdots z_{d}^{a_d}$, where $a_j\in\Z$, $j=1,2,\cdots,d$,
and $C$ is a non-zero constant,    is called monomial. 
\begin{definition}
	We say that a Laurent polynomial $h\left(z_{1}, z_{2},\cdots,z_d\right)$ is irreducible if it can not be factorized non-trivially, that is, there are no non-monomial Laurent polynomials $f\left(z_{1}, z_{2},\cdots,z_d\right)$ and $g\left(z_{1}, z_{2},\cdots,z_d\right)$ such that 
	$h=fg$.
\end{definition}

\begin{theorem}\label{thm2}
	\cite{liu1}
	Let $d\geq 3$. 
	The Laurent polynomial $ \mathcal{P}_V(z,\lambda)$ (as a function of $z\in\C^d$) is irreducible for any $\lambda\in \C$. 
\end{theorem}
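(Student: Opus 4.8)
The plan is to determine the Newton polytope of $\mathcal{P}_V(z,\lambda)$ in the $z$-variables and to exploit its Minkowski-indecomposability. Fix $\lambda\in\C$ and regard $\mathcal{P}_V=\mathcal{P}_V(\cdot,\lambda)$ as a Laurent polynomial in $z=(z_1,\dots,z_d)$. For a Laurent polynomial $h$, let $N(h)\subset\R^d$ denote the convex hull of the exponent vectors of its monomials. The stated definition of irreducibility is equivalent to a statement about $N$: since $N(fg)=N(f)+N(g)$ (Minkowski sum) for any factorization, and since a factor is a monomial precisely when its Newton polytope is a single point, it suffices to prove that $N(\mathcal{P}_V)$ is Minkowski-indecomposable, i.e. that in any decomposition $N(\mathcal{P}_V)=A+B$ one of $A,B$ is a point.

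First I would show that $N(\mathcal{P}_V)$ is exactly the cross-polytope
\[
C=\operatorname{conv}\Big\{\pm\tfrac{Q}{q_j}\mathbf{e}_j:\ j=1,\dots,d\Big\}.
\]
The inclusion $N(\mathcal{P}_V)\subseteq C$ comes from power counting in the determinant $\mathcal{P}_V=\det(\mathcal{D}_V(z)-\lambda I)$. Each monomial arises from a permutation of the fundamental domain $W$, whose cycles are closed loops on the torus $\Z^d/\Gamma$; the exponent $a_j$ of $z_j$ is the net winding of these loops in direction $j$, and producing net winding $a_j$ requires at least $|a_j|\,q_j$ of the (at most $Q$) nearest-neighbour steps in that direction. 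Hence $\sum_j (q_j/Q)|a_j|\le 1$, which is precisely the defining inequality of $C$. That the vertices $\pm\frac{Q}{q_j}\mathbf{e}_j$ are attained, with the nonzero constant coefficients $\kappa_j$, is recorded in \eqref{g37}; consequently $N(\mathcal{P}_V)=C$.

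Next I would invoke the geometry of $C$. For $d\ge 3$ the cross-polytope is simplicial: each of its facets is a simplex $\operatorname{conv}\{\epsilon_1\frac{Q}{q_1}\mathbf e_1,\dots,\epsilon_d\frac{Q}{q_d}\mathbf e_d\}$ of dimension $d-1\ge 2$ over a choice of signs $\epsilon_j\in\{\pm1\}$, hence Minkowski-indecomposable. By Shephard's decomposability criterion (adjacent facets share a ridge, and the facet-adjacency graph is connected), indecomposability of all facets propagates to indecomposability of $C$ itself. Therefore in $C=A+B$ one summand is a point, the corresponding factor is a monomial, and $\mathcal{P}_V$ is irreducible. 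This argument also makes transparent the role of the hypothesis $d\ge 3$: for $d=2$ the ``cross-polytope'' is a rhombus, hence a parallelogram, which splits as a Minkowski sum of two segments — the propagation step fails because the facets are only $1$-dimensional — and indeed the Fermi variety can be reducible in two dimensions.

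The main obstacle is the polytope-geometric step: making the indecomposability of $C$ rigorous (verifying the connectivity hypothesis in Shephard's criterion and the reduction from facets to the whole polytope) and, upstream, upgrading the heuristic winding/step count into a clean bound that pins $N(\mathcal{P}_V)$ down to $C$ rather than merely to the coordinate box $\prod_j[-Q/q_j,\,Q/q_j]$. Once $N(\mathcal{P}_V)=C$ is established the remainder is soft; the per-variable leading coefficients being the nonzero constants $\kappa_j$ of \eqref{g37} furnishes a convenient cross-check, since any factorization would force the extreme coefficients of the factors in each $z_j$ to be units, i.e. monomials, consistent with — and implied by — the Newton-polytope picture.
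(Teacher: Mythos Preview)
The paper does not prove this theorem at all: it is quoted verbatim from \cite{liu1} and used as a black box (only its consequence, Lemma~\ref{le1}, is needed later). So there is no ``paper's own proof'' to compare against.

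Your Newton--polytope approach is nonetheless a correct and self-contained proof. The two substantive steps both go through. First, the containment $N(\mathcal{P}_V)\subseteq C$ can be made rigorous exactly as you sketch: every nonzero term in the Leibniz expansion of $\det(\mathcal{D}_V(z)-\lambda I)$ corresponds to a permutation of $W$ whose nontrivial cycles are nearest-neighbour closed walks on $\Z^d/\Gamma$; a cycle with winding vector $(w_1,\dots,w_d)$ uses at least $\sum_j |w_j|q_j$ vertices, and summing over cycles gives $\sum_j q_j|a_j|\le Q$, i.e.\ the cross-polytope inequality. The reverse inclusion follows from \eqref{g37}, since the $2d$ vertices $\pm(Q/q_j)\mathbf e_j$ occur with nonzero coefficients $\kappa_j$, and the Newton polytope is the convex hull of the support. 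Second, for $d\ge 3$ the (anisotropically scaled) cross-polytope $C$ is simplicial, hence all its $2$-faces are triangles, and Shephard's classical criterion (a $d$-polytope, $d\ge 3$, with all $2$-faces triangular is Minkowski-indecomposable) applies directly; your phrasing via ``indecomposable facets and connected facet-adjacency graph'' is also valid but the triangle-$2$-face version is cleaner and avoids the propagation argument. Since $N(fg)=N(f)+N(g)$ over any integral domain, indecomposability of $C$ forces one factor in any $\mathcal P_V=fg$ to have a singleton Newton polytope, i.e.\ to be a monomial.

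For context, the original argument in \cite{liu1} is more hands-on and exploits the specific Floquet-diagonalized form of the operator rather than polytope geometry; your route is closer in spirit to the later generalization \cite{flm21} mentioned in the remarks. The Newton--polytope argument is shorter and makes the role of $d\ge 3$ transparent (the $2$-dimensional cross-polytope is a parallelogram, hence decomposable), at the cost of importing a convex-geometry lemma; the approach of \cite{liu1} stays within elementary algebra but is lengthier. Either way your ``main obstacle'' is not a genuine obstacle: both the step-counting bound and Shephard's criterion are standard and require no new ideas to make precise.
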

Denote by $[V]$ the average of $V$ over one periodicity cell, namely
\begin{equation*}
	[V]=\frac{1}{Q}\sum_{n\in W}V(n).
\end{equation*}
Denote by ${\bf 0}$ the zero potential. 
\begin{theorem}\label{thm1}
	\cite{liu1}
	Let $d=2$.
	The Laurent polynomial $ \mathcal{P}_V(z,\lambda)$ (as a function of $z\in\C^2$) is irreducible for any $\lambda\in \C$ except for $\lambda=[V] $. Moreover, if $ \mathcal{P}_{V}(z,[V])$ is reducible,
	$ \mathcal{P}_{V}(z,[V])=\mathcal{P}_{\bf 0}(z,0)$.
\end{theorem}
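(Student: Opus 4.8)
The plan is to regard $\mathcal{P}_V(z,\lambda)$, for a fixed $\lambda$, as a Laurent polynomial in $z=(z_1,z_2)$ and to read off irreducibility directly from the geometry of its Newton polygon. First I would pin down the support. Expanding $\det(\mathcal{D}_V(z)-\lambda I)$ as a sum over permutations of the $Q$ cell sites, each nonzero term corresponds to a disjoint union of nearest-neighbor cycles on the discrete torus $\Z_{q_1}\times\Z_{q_2}$, and it produces the monomial $z_1^{W_1}z_2^{W_2}$ where $(W_1,W_2)$ is the total homology class of those cycles. Since a cycle of homology $(w_1,w_2)$ has length at least $|w_1|q_1+|w_2|q_2$ and the cycle lengths sum to $Q=q_1q_2$, the triangle inequality forces $q_1|W_1|+q_2|W_2|\le Q$. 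Hence the support lies in the rhombus $N$ with vertices $(\pm q_2,0),(0,\pm q_1)$, matching \eqref{g37}; equality forces all winding in one direction and uses every site for shift entries, so the four vertices are pure monomials carrying the constant coefficients $\kappa_1,\kappa_2$ and no $\lambda$- or $V$-dependence. Because $\gcd(q_1,q_2)=1$, each edge of $N$ is a primitive lattice segment, so each edge polynomial of $\mathcal{P}_V$ is a binomial $\kappa_i z^{p}+\kappa_j z^{p'}$ with $p'-p$ primitive, and is therefore irreducible.

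Next I would run the factorization rigidity. Suppose $\mathcal{P}_V=fg$ nontrivially. By additivity of Newton polygons under multiplication $N=N(f)+N(g)$, and along every edge direction the edge polynomial of $\mathcal{P}_V$ is the product of the corresponding edge polynomials of $f$ and $g$. Irreducibility of each of the four binomial edges forces each edge of $N$ (which has lattice length one) to be contributed entirely by one factor. The four edge vectors of $N$ are $(\mp q_2,\pm q_1)$ and $(\mp q_2,\mp q_1)$, coming in two antipodal pairs; since the edge vectors of any closed lattice polygon sum to zero, the only subsets that can be the edge set of $N(f)$ are the empty set, a single antipodal pair, or all four. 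The first and last give trivial factorizations, so a genuine factorization must assign one antipodal pair to each factor, i.e. up to monomials $f=\mu_0+\mu_1 z_1^{q_2}z_2^{-q_1}$ and $g=\nu_0+\nu_1 z_1^{q_2}z_2^{q_1}$. The product of these two binomials is supported exactly on the four vertices of $N$. Consequently $\mathcal{P}_V(z,\lambda)$ is reducible if and only if all of its interior monomials vanish, that is, if and only if
\[
\mathcal{P}_V(z,\lambda)=\kappa_1(z_1^{q_2}+z_1^{-q_2})+\kappa_2(z_2^{q_1}+z_2^{-q_1})=:P^\ast(z).
\]

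It remains to identify when this degeneration occurs. For the free operator $\lambda=0$ is distinguished: the bipartite (chiral) symmetry of $\Delta$ on $\Z^2$, equivalently a direct resultant computation using the one-dimensional discriminant identity $\det(tI-A_1(z_1))=\mathcal{Q}_{q_1}(t)-(z_1+z_1^{-1})$ together with the Chebyshev composition $\mathcal{Q}_{q_1}\circ\mathcal{Q}_{q_2}=\mathcal{Q}_{q_1q_2}$, makes $\mathcal{P}_{\mathbf 0}(z,0)$ reducible, whence $\mathcal{P}_{\mathbf 0}(z,0)=P^\ast$ by the previous paragraph. Thus reducibility of $\mathcal{P}_V(\cdot,\lambda_0)$ yields the identity $\mathcal{P}_V(z,\lambda_0)=\mathcal{P}_{\mathbf 0}(z,0)$. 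To pin $\lambda_0=[V]$ I would use that the Laplacian has zero diagonal, so $\operatorname{tr}\mathcal{D}_V(z)=\sum_{n\in W}V(n)=Q[V]$ is constant in $z$; hence the coefficient of $\lambda^{Q-1}$ in $\mathcal{P}_V(z,\lambda)$ is the constant $(-1)^{Q-1}Q[V]$, sitting at the center $z^0$. Combining the vanishing of the interior coefficients with the one-dimensional discriminant structure of the near-vertex coefficient of $z_1^{q_2-1}$ (which is independent of $z_2$ and equals, up to sign, $\sum_{n_2}\Delta_{V(\cdot,n_2)}(\lambda_0)$) and its $z_2^{q_1-1}$ counterpart forces $\lambda_0=[V]$, giving both assertions.

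I expect the main obstacle to be twofold. The heart is the factorization-rigidity step, and the delicate point is turning ``each edge is owned by one factor'' into full control of $N(f),N(g)$; the antipodal-pair closure argument does this cleanly, but it rests on having established that the support is exactly the rhombus with primitive, binomial edges — the winding-number bound is what makes this rigorous and is where $\gcd(q_1,q_2)=1$ enters decisively. The second genuinely delicate point is the final identification $\lambda_0=[V]$: the reduced form $P^\ast$ has discarded all information about $V$ and $\lambda_0$, so this value must be recovered from the precise manner in which the interior coefficients vanish, and making that bookkeeping airtight — rather than merely exhibiting reducibility at a single $\lambda$ — is the crux of the ``Moreover'' statement.
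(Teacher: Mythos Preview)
This theorem is not proved in the present paper; it is quoted from \cite{liu1} and used as a black box (see also the remark pointing to \cite{flm21}). There is therefore no in-paper argument to compare your proposal against.

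Assessed on its own merits, your Newton-polygon route to the dichotomy ``$\mathcal P_V(\cdot,\lambda_0)$ is irreducible or equals the four-vertex Laurent polynomial $P^\ast$'' is sound. The winding/homology bound giving the rhombus support, the primitivity of each edge from $\gcd(q_1,q_2)=1$, the consequent irreducibility of the four edge binomials, the edge-ownership in a Minkowski decomposition, and the antipodal-pair closure constraint all go through and force any nontrivial factorization to be a product of two binomials supported on the vertices.

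Where the proposal is not yet a proof is exactly where you flag it: pinning $\lambda_0=[V]$. Your trace remark does not help here. The statement that the $\lambda^{Q-1}$-coefficient of $\mathcal P_V(z,\lambda)$ equals $(-1)^{Q-1}Q[V]$ is a fact about the expansion in $\lambda$; it says nothing about the $z$-monomial structure at the fixed value $\lambda_0$, so it imposes no constraint once $\lambda$ has been specialized. The clean way to close the gap, using only what is already in this paper, is to pass to $\tilde{\mathcal P}_V(z,\lambda_0)=\mathcal P_V(z_1^{q_1},z_2^{q_2},\lambda_0)$ and grade by total $z$-degree. If $\mathcal P_V(\cdot,\lambda_0)=P^\ast$ then $\tilde{\mathcal P}_V(\cdot,\lambda_0)$ is supported only in total degrees $\pm Q$, so its homogeneous component of degree $Q-1$ vanishes identically; but by \eqref{gtmp}--\eqref{equ2} that component equals $([V]-\lambda_0)\sum_{n\in W} h(z)\big/\bigl(\sum_j\rho^j_{n_j}z_j\bigr)$, and the sum is not identically zero (send $z_1\to -z_2$ and invoke \eqref{dckey1}). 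Hence $\lambda_0=[V]$. Your ``near-vertex $z_1^{q_2-1}$ coefficient'' idea could in principle be pushed through, but as written it is a gesture rather than an argument.
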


\begin{remark}
 \begin{enumerate}
	\item 
	The statements in	Theorem \ref{thm1} are sharp. Namely, 
	for $d=2$ and any constant function $V$, $F_{[V]}(V)/\Z^2$ has exactly two irreducible components. 
 
	\item Theorems \ref{thm2} and \ref{thm1} imply that the Fermi variety $F_{\lambda} (V)/\Z^d$ is irreducible except possibly  for
	$d=2$ and $\lambda=[V]$. 
	Partial results on
the irreducibility  of Fermi  and Bloch varieties, particularly 	in dimensions two and three,   were obtained in earlier  works ~\cite{GKTBook,ktcmh90,bktcm91,bat1,batcmh92,ls,battig1988toroidal,shva}.
	We should mention that 	reducible Fermi and Bloch  varieties  are known to occur for  periodic graph operators,  e.g.,  ~\cite{shi1,fls}. 
	\item Fillman, Liu and Matos \cite{flm21} have  applied   the approach introduced in ~\cite{liu1} to study   discrete periodic Schr\"odinger operators on general lattices.
	 \end{enumerate}
\end{remark}


 \begin{lemma}\label{le1}
 	Let $d\geq 1$.
 	Assume that 
 	${F}_{\lambda_0} (V)={F}_{\lambda_0} (Y)$ for some $\lambda_0\in\C$. Then 
 	$\mathcal{P}_V(z,\lambda_0) =\mathcal{P}_{Y}(z,\lambda_0)$.
 \end{lemma}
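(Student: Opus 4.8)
The plan is to prove Lemma \ref{le1} by connecting the geometric equality of Fermi varieties to the algebraic equality of their defining Laurent polynomials, exploiting the irreducibility results stated in Theorems \ref{thm2} and \ref{thm1}. Recall from \eqref{gfer} that $F_{\lambda_0}(V)$ is exactly the zero set of $\mathcal{P}_V(z,\lambda_0)$ (after the change of variables $z_j=e^{2\pi i k_j}$), and likewise for $Y$. So the hypothesis ${F}_{\lambda_0}(V)={F}_{\lambda_0}(Y)$ says that the two Laurent polynomials $\mathcal{P}_V(\cdot,\lambda_0)$ and $\mathcal{P}_Y(\cdot,\lambda_0)$ have the same zero set in $(\C^\star)^d$. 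The goal is to upgrade ``same zero set'' to ``equal as polynomials.''

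First I would clear denominators: each $\mathcal{P}_V(z,\lambda_0)$ is a genuine Laurent polynomial, so by multiplying through by a suitable monomial $z_1^{Q/q_1}\cdots z_d^{Q/q_d}$ I can pass to honest polynomials $\tilde{P}_V$ and $\tilde{P}_Y$ in $\C[z_1,\dots,z_d]$ whose zero sets on $(\C^\star)^d$ coincide. Since multiplying by a monomial does not change irreducibility over $(\C^\star)^d$, Theorem \ref{thm2} (for $d\geq 3$) and Theorem \ref{thm1} (for $d\le 2$, with the one exceptional value handled separately) tell me these polynomials are irreducible, or in the special two-dimensional case reduce to a known explicit form. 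The key classical tool is the Nullstellensatz together with the fact that an irreducible polynomial's zero set determines the polynomial up to a scalar: if two polynomials are irreducible and have the same (nonempty, codimension-one) zero set, they are scalar multiples of each other. Thus $\mathcal{P}_V(z,\lambda_0)=c\,\mathcal{P}_Y(z,\lambda_0)$ for some constant $c\in\C^\star$.

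It then remains to pin down $c=1$. This I would do by comparing the explicit leading terms recorded in \eqref{g37}: both $\mathcal{P}_V$ and $\mathcal{P}_Y$ have the same highest-degree coefficients $\kappa_j$ attached to $z_j^{\pm Q/q_j}$ (these depend only on $\Gamma$, not on the potential), so matching, say, the coefficient of $z_1^{Q/q_1}$ on both sides forces $c=1$. Since $d\geq 1$ is allowed in the lemma, I would also verify the low-dimensional edge cases: for $d=1$ the Laurent polynomial is automatically irreducible away from trivial situations, and for $d=2$ at the exceptional energy $\lambda_0=[V]$ I would invoke the explicit description $\mathcal{P}_V(z,[V])=\mathcal{P}_{\bf 0}(z,0)$ from Theorem \ref{thm1} to reach the same conclusion directly.

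The main obstacle I anticipate is handling potential reducibility and the exact matching of zero sets as \emph{sets} rather than as \emph{divisors} (i.e., without multiplicity information, since the Fermi variety is just the zero locus). The cleanest route is to lean entirely on irreducibility: for an irreducible polynomial the radical ideal of its zero set is principal and generated by the polynomial itself, so equal zero sets of two irreducible polynomials immediately give proportionality, with no multiplicity bookkeeping. The only real care is the two-dimensional exceptional case, where irreducibility fails; there the explicit factorization in Theorem \ref{thm1} saves the argument. I would therefore structure the proof as a short application of the Nullstellensatz, reducing the entire content to the irreducibility theorems already available.
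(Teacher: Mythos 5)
Your proposal follows essentially the same route as the paper's (very terse) proof for $d\geq 2$: equality of Fermi varieties means the Laurent polynomials $\mathcal{P}_V(\cdot,\lambda_0)$ and $\mathcal{P}_Y(\cdot,\lambda_0)$ have the same zero set in $(\C^\star)^d$; irreducibility (Theorem \ref{thm2} for $d\geq 3$, Theorem \ref{thm1} for $d=2$) together with the Nullstellensatz gives proportionality; and the potential-independent leading terms in \eqref{g37} force the constant to equal $1$, the reducible exceptional case in $d=2$ being disposed of by the explicit identity $\mathcal{P}_V(z,[V])=\mathcal{P}_{\bf 0}(z,0)$. Your divisibility-plus-degree argument also implicitly rules out the ``mixed'' case (one polynomial irreducible, the other reducible, with a common zero set), which is worth a sentence but is not a gap.

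One genuine slip: your treatment of $d=1$ is wrong as stated. There is no irreducibility to lean on in one variable: by \eqref{g37}, one has $z\,\mathcal{P}_V(z,\lambda_0)=(-1)^{q_1+1}\bigl(z^2+(-1)^{q_1+1}f_V(\lambda_0)\,z+1\bigr)$, a quadratic in $z$, which factors into linear factors over $\C$ whenever it has two roots --- i.e., generically it is reducible, not ``automatically irreducible.'' The correct argument, which is how the paper handles $d=1$, is direct: the zero set of a quadratic of the form $z^2+bz+1$ determines $b$ (two distinct roots $z_\pm$ with $z_+z_-=1$ give $b=-(z_++z_-)$; a one-point zero set forces a double root $w=\pm 1$, hence $b=\mp 2$), so equal zero sets yield $f_V(\lambda_0)=f_Y(\lambda_0)$ and therefore $\mathcal{P}_V(z,\lambda_0)=\mathcal{P}_Y(z,\lambda_0)$. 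This is a one-line repair of an edge case rather than a structural flaw; your argument for $d\geq 2$ is complete and matches the paper.
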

 \begin{proof}
 	When $d\geq 2$, Lemma \ref{le1} follows from Theorems \ref{thm2},  \ref{thm1} and \eqref{g37}. When $d=1$, by  \eqref{g37}, one has that 
 	\begin{equation*}
 	\mathcal{P}_V(z,\lambda)=(-1)^{q_1+1}z+(-1)^{q_1+1}z^{-1}+f_V(\lambda),
 	\end{equation*}
 	where $f_V(\lambda)$ does not depend on $z$. This implies Lemma \ref{le1} for $d=1$.
 \end{proof}

 \begin{remark}\label{re3}
 	Equality \eqref{g60} and Lemma \ref{le1} imply
 	\begin{enumerate}
 		\item $V$ and $Y$ are Floquet isospectral if and only if $\lambda^m_V(k)=\lambda^m_Y(k)$ for any $k\in\R^d$, $m=1,2,\cdots,Q$;
 			\item $V$ and $Y$ are Floquet isospectral  if and  only if  $\mathcal{P}_V(z,\lambda) =\mathcal{P}_{Y}(z,\lambda)$ for any  $\lambda\in\C$;
 		\item $V$ and $Y$ are  Fermi isospectral if and only if   $\mathcal{P}_V(z,\lambda_0) =\mathcal{P}_{Y}(z,\lambda_0)$ for some $\lambda_0\in\C$;
 			\item $V$ and $Y$ are  Fermi isospectral  if and only if there exists some $\lambda_0\in\C$ such that  for any $k\in\R^d$,  
 		\begin{equation*}
 		\prod_{m=1}^Q(\lambda^m_V(k)-\lambda_0)=	\prod_{m=1}^Q(\lambda^m_Y(k)-\lambda_0).
 		\end{equation*}
 	\end{enumerate}
 \end{remark}

Define the discrete Fourier transform $\hat{V}(l) $ for $l\in {W}$ by 
\begin{equation*}
\hat{V}(l) =\frac{1}{{Q}}\sum_{ n\in {W} } V(n) \exp\left\{-2\pi i \left(\sum_{j=1}^d \frac{l_j n_j}{q_j} \right)\right\}.
\end{equation*}
For convenience, we extend $\hat{V}(l)$ to $ \Z^d$    periodically, namely, for any $l\equiv m\mod \Gamma$,
\begin{equation*}
\hat{V}(l)=\hat{V}(m).
\end{equation*}

Define
\begin{equation}\label{gtm}
\tilde{\mathcal{D}}_V(z)= \tilde{\mathcal{D}}_V(z_1,z_2,\cdots,z_d)= \mathcal{D}_V(z_1^{q_1},z_2^{q_2},\cdots,z_d^{q_d}),
\end{equation}
and 
\begin{equation}\label{gtp}
\tilde{\mathcal{P}}_V(z,\lambda)=\det( \tilde{\mathcal{D}}_V(z,\lambda)-\lambda I)= \mathcal{P}_V(z_1^{q_1},z_2^{q_2},\cdots,z_d^{q_d},\lambda).
\end{equation}
Let $$\rho^j_{n_j}=e^{2\pi  \frac{n_j}{q_j} i},$$
where $0\leq n_j \leq q_j-1$, $j=1,2,\cdots,d$.

	When $q_{1}$ and $q_2$ are coprime, one has that
\begin{equation}\label{dckey1}
\rho_{n_{1}}^{1}- \rho_{n_{2}}^{2}\neq 0,
\end{equation}
provided that $ (n_{1},n_{2})\neq (0,0)\mod q_{1}\Z\oplus q_2\Z$.

A straightforward application of the  discrete Floquet transform (e.g.,  \cite{liu1,ksurvey}) leads to
\begin{lemma}\label{lesep} 
	Let $n=(n_1,n_2,\cdots,n_d) \in {W}$ and $n^\prime=(n_1^\prime,n_2^\prime,\cdots,n_d^\prime) \in {W}$. Then 
	$\tilde{\mathcal{D}}_V(z)$ is unitarily equivalent to 		
	$
	A+B_V,
	$
	where $A$ is a diagonal matrix with entries
	\begin{equation}\label{A}
	A(n;n^\prime)=\left(\sum_{j=1}^d \left(\rho^j_{n_j}z_j+\frac{1}{\rho^j_{n_j} z_j} \right)\right) \delta_{n,n^{\prime}}
	\end{equation}
	and \begin{equation}\label{gb}
	B_V(n;n^\prime)=\hat{V} \left(n_1-n_1^\prime,n_2-n_2^\prime,\cdots, n_d-n_d^\prime\right).
	\end{equation}
	In particular,
	\begin{equation*}
	\tilde{\mathcal{P}}_V(z, \lambda) =\det(A+B_V-\lambda I).
	\end{equation*}

\end{lemma}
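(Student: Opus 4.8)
The plan is to prove the lemma by an explicit diagonalization of the free (Laplacian) part of the fiber matrix, carried out in three stages: a similarity by a diagonal phase, followed by the discrete Fourier transform on the fundamental domain $W$, with the potential tracked in parallel. First I would write $\mathcal{D}_V(z)$ explicitly. Imposing the boundary condition \eqref{gei2} on equation \eqref{gei1} and reducing to $\{u(n):n\in W\}$, the matrix $\mathcal{D}_V(z)$ acts on $\C^W$ as a sum of directional weighted cyclic shifts plus the potential: in the $j$-th coordinate the forward shift carries weight $1$ on every interior bond and weight $z_j$ on the single wrap-around bond $n_j=q_j-1\mapsto n_j=0$, the backward shift carries weight $z_j^{-1}$ on the corresponding wrap-around bond, and the diagonal is the multiplication operator $\mathrm{diag}(V(n))_{n\in W}$. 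Equivalently, $\mathcal{D}_V(z)=\sum_{j=1}^d\big(T_j+T_j^{-1}\big)+\mathrm{diag}(V)$, where $T_j$ is the weighted cyclic shift acting in the $j$-th tensor factor of $\C^W=\bigotimes_{j=1}^d\C^{q_j}$.

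Next I would perform the substitution \eqref{gtm}, replacing $z_j$ by $z_j^{q_j}$ so that the wrap-around weights become $z_j^{\pm q_j}$, and then conjugate by the diagonal matrix $S=\mathrm{diag}\big(\prod_{j=1}^d z_j^{n_j}\big)_{n\in W}$. A bond-by-bond computation shows that $S^{-1}T_jS$ has uniform weight $z_j$ on every forward bond (the anomalous corner weight $z_j^{q_j}$ being rescaled exactly to $z_j$), so the free part becomes the genuine circulant $\sum_{j=1}^d\big(z_j\,\mathrm{Shift}_j+z_j^{-1}\,\mathrm{Shift}_j^{-1}\big)$. Applying the Fourier transform $F=\bigotimes_{j=1}^d F_j$ on $\C^W$ simultaneously diagonalizes the commuting circulants: on the Fourier mode labeled $n=(n_1,\dots,n_d)$ (after a harmless relabeling of the dual index) the shift $\mathrm{Shift}_j$ acts as multiplication by $\rho^j_{n_j}$, so the free part becomes the diagonal matrix with entries $\sum_{j=1}^d\big(\rho^j_{n_j}z_j+(\rho^j_{n_j}z_j)^{-1}\big)$. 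This is precisely $A$ in \eqref{A}.

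The potential requires no new work: $\mathrm{diag}(V)$ is diagonal, hence commutes with the diagonal $S$ and is unchanged by the phase conjugation; and with the normalization $(Fu)(l)=Q^{-1/2}\sum_{n\in W}\exp\{-2\pi i\sum_j l_jn_j/q_j\}u(n)$ one computes $\big(F\,\mathrm{diag}(V)\,F^{-1}\big)(l,l')=\hat V(l-l')$, which is $B_V$ in \eqref{gb}. Setting $U=FS^{-1}$ therefore gives $U\,\tilde{\mathcal{D}}_V(z)\,U^{-1}=A+B_V$, and taking determinants yields $\tilde{\mathcal{P}}_V(z,\lambda)=\det(A+B_V-\lambda I)$ through \eqref{gtp}. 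For $|z_j|=1$ the phase $S$ is unitary, so $U=FS^{-1}$ is unitary and the equivalence is genuinely unitary; for general $z\in(\C^\star)^d$ the map $U$ is only an invertible similarity, but this already preserves the characteristic polynomial, so the determinant identity persists for all $z$ by analyticity.

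I expect the only real obstacle to be the bookkeeping of the wrap-around corner terms through the substitution $z_j\mapsto z_j^{q_j}$ and the phase conjugation—verifying that the single anomalous weight $z_j^{q_j}$ on each cyclic bond is rescaled to the uniform interior weight $z_j$, so that the transformed free part is exactly circulant rather than circulant up to a corner defect. The remaining point is cosmetic: the statement asserts unitary equivalence, which is literally correct on the unimodular torus $|z_j|=1$ where the direct-integral Floquet transform lives, while the polynomial identity $\tilde{\mathcal{P}}_V(z,\lambda)=\det(A+B_V-\lambda I)$—the only consequence used in the sequel—holds on all of $(\C^\star)^d$ by the similarity above.
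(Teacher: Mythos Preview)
Your proposal is correct and is exactly the approach the paper has in mind: the paper does not prove Lemma~\ref{lesep} but simply states that it is ``a straightforward application of the discrete Floquet transform'' with citations to \cite{liu1,ksurvey}, and your phase-gauge-then-Fourier computation is precisely that transform made explicit. Your closing remark that the equivalence is literally unitary only on $|z_j|=1$, with the determinant identity extending to all of $(\C^\star)^d$ by similarity and analyticity, is also the standard reading and is all that the sequel requires.
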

\section{Technical preparations, I}\label{S3}

Assume $\sum_{j=1}^r d_j=d$ with $r\geq 2$ and $d_j\geq 1$, $j=1,2,\cdots, r$. For convenience, let $d_0=0$.
For any $l=(l_1,l_2,\cdots, l_d)\in\Z^d$, denote by 
\begin{equation}\label{g67}
\tilde{l}_j=(l_{d_{j-1}+1},l_{d_{j-1}+2},\cdots,l_{d_j}), j=1,2,\cdots,r.
\end{equation}

For any $n=(n_1,n_2,\cdots, n_d)\in\Z^d$, denote by 
\begin{equation}
\tilde{n}_j=(n_{d_{j-1}+1},n_{d_{j-1}+2},\cdots,n_{d_j}), j=1,2,\cdots,r.
\end{equation}
For $j=1,2,\cdots, r$, denote by 
\begin{equation*}
{W}_j= \{\tilde{n}_j:  0\leq n_m\leq q_m-1, m=d_{j-1}+1,d_{j-1}+2,\cdots, d_j\}.
\end{equation*}
\begin{lemma}\label{leSeparable1}
A potential 
	$V$ is $(d_1,d_2,\cdots,d_r)$ separable    if and only if     for any $l\in W$    with  at least two non-zero  
	$ \tilde{l}_{j},$ $j=1,2,\cdots, r$,  $
	\hat{V}(l)=0.
	$
\end{lemma}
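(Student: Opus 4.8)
The plan is to prove the equivalence through the Fourier characterization of separability, using the relationship between the Fourier coefficients $\hat{V}(l)$ and the block structure induced by the decomposition $\sum_{j=1}^r d_j = d$. First I would establish the forward direction. Suppose $V = \bigoplus_{j=1}^r V_j$ is $(d_1, d_2, \cdots, d_r)$ separable, so that $V(n) = \sum_{j=1}^r V_j(\tilde{n}_j)$ where each $V_j$ depends only on the block of variables $\tilde{n}_j$. Substituting this into the definition of the discrete Fourier transform and using the fact that the exponential factors over the blocks, I expect the Fourier coefficient to split as a sum of products of lower-dimensional transforms against delta-like factors. Specifically, writing the exponential $\exp\{-2\pi i \sum_{m} l_m n_m / q_m\}$ as a product over the $r$ groups of indices, the sum over $n \in W$ factors as a product over the blocks $W_1, \cdots, W_r$. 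Because $V_j$ depends only on $\tilde{n}_j$, integrating (summing) $V_j$ against the exponential in the other blocks produces a factor $\prod_{j' \neq j} \delta_{\tilde{l}_{j'}, 0}$ (a Kronecker delta forcing the off-block frequencies to vanish). Hence $\hat{V}(l)$ can be nonzero only when at most one $\tilde{l}_j$ is nonzero, which gives the required vanishing whenever at least two of the $\tilde{l}_j$ are nonzero.

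For the converse, I would assume that $\hat{V}(l) = 0$ for every $l \in W$ with at least two nonzero blocks $\tilde{l}_j$, and reconstruct $V$ from its Fourier series. Inverting the transform, $V(n) = \sum_{l \in W} \hat{V}(l) \exp\{2\pi i \sum_m l_m n_m / q_m\}$, and the hypothesis allows me to discard all terms except those where exactly one block $\tilde{l}_j$ is nonzero, together with the all-zero term $l = 0$. I would then group the surviving terms by which single block is active: the $l=0$ term contributes a constant, and for each $j$ the terms with only $\tilde{l}_j \neq 0$ depend only on the variables $\tilde{n}_j$. Defining $V_j(\tilde{n}_j)$ to be the sum over those terms (distributing the constant appropriately, say absorbing it into $V_1$), I obtain $V(n) = \sum_{j=1}^r V_j(\tilde{n}_j)$, which is exactly the separability decomposition \eqref{g61}.

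The main step requiring care is the bookkeeping of the exponential factorization and the precise handling of the constant ($l=0$) term, since it is the unique Fourier mode with \emph{zero} nonzero blocks and must be assigned to exactly one of the $V_j$ to avoid double-counting. I expect the technical heart to be verifying that the delta factors appear correctly in the forward direction; this rests on the orthogonality relation $\sum_{\tilde{n}_{j'} \in W_{j'}} \exp\{-2\pi i \sum_{m} l_m n_m / q_m\} = \left(\prod_{m} q_m\right) \delta_{\tilde{l}_{j'}, 0}$ for each inactive block, which is a standard discrete orthogonality identity over the fundamental domain. Once this factorization is in hand, both directions are essentially formal manipulations of finite sums, and no deep input beyond the definition of the Fourier transform and the product structure of $W = W_1 \times W_2 \times \cdots \times W_r$ is needed.
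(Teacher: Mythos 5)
Your proof is correct and takes essentially the same route as the paper: the forward direction is the same direct computation via discrete orthogonality over the product structure of $W$, and your converse---reconstructing $V$ by Fourier inversion and grouping the surviving single-block modes into the $V_j$---is equivalent to the paper's construction of the $V_j$ through their prescribed Fourier coefficients. The only cosmetic difference is the handling of the constant mode: you absorb $\hat{V}(0)$ into $V_1$, while the paper splits the average evenly, setting $[V_j]=\frac{1}{r}[V]$ for each block.
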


\begin{proof}
	Suppose  $V$ is  $(d_1,d_2,\cdots,d_r)$ separable  on $\Z^d$. Then   
	there exist functions $V_j$ on $\Z^{d_j}$, $j=1,2,\cdots, r$, such that
	$$V(n)=\sum_{j=1}^rV_j(\tilde{n}_j).$$
A	direct computation shows that  for any $l$ with at least two non-zero  $\tilde{l}_j$, $j=1,2,\cdots, r$, 
	\begin{equation}\label{g38}
	\hat{V}(l)=0.
	\end{equation}
	Moreover, for any $\tilde{l}_j\in W_j$, 
	\begin{equation}\label{g39}
	\hat{V}(0,\cdots,0,  \tilde{l}_j,0,\cdots, 0)=\hat{V}_j(\tilde{l}_j)+\delta_{\tilde{l}_j}\sum_{m=1,m\neq j}^r[V_m],
	\end{equation}
	where for $\tilde{l}_j=(0,0,\cdots,0)$,  $\delta_{\tilde{l}_j}=1$, otherwise, $\delta_{\tilde{l}_j}=0$.
	By \eqref{g38}, 	we finish one side of the proof.  
Assume that $ V$ satisfies    $\hat{V}(l)=0$   for any $l=(\tilde{l}_1,\tilde{l}_2,\cdots,\tilde{l}_r)\in W$   with  at least two nonzero 
$  \tilde{l}_{j},$ $j=1,2,\cdots, r$.
	Let $V_j$ be such that its discrete Fourier transform $\hat{V}_j$ satisfies for any $\tilde{l}_j\in W_j\backslash\{0,0,\cdots,0\}$
	\begin{equation}\label{g56}
	\hat{V}_j(\tilde{l}_j)= 	\hat{V}(0,\cdots,0,  \tilde{l}_j,0,\cdots, 0)
	\end{equation}
	and 
	$[V_j]=\hat{V}_j((0,0,\cdots,0))=\frac{1}{r} [V]$. Let $\tilde{V}(n)=\sum_{j=1}^r V_j(\tilde{n}_j),n\in \Z^d$. By \eqref{g39}, one has that
	for any $\tilde{l}_j\in W_j$, 
	\begin{equation}\label{g40}
	\hat{V}(0,\cdots,0,  \tilde{l}_j,0,\cdots, 0)= \hat{\tilde{V}}(0,\cdots,0,  \tilde{l}_j,0,\cdots, 0).
	\end{equation}
	Since $ \hat{V}(l)=\hat{\tilde{V}}(l)=0$ for any $l$ with  at least two  non-zero 
	$  \tilde{l}_{j},$ $j=1,2,\cdots, r$, we have 
	\begin{equation*}
	\hat{V}(l)=\hat{\tilde{V}}(l),l\in W.
	\end{equation*}
	This implies $V=\tilde{V}$ and hence $V$ is  $(d_1,d_2,\cdots,d_r)$ separable.
	
\end{proof}

Recall that $\rho^j_{l} =e^{2\pi i \frac{l}{q_j}}$.
	\begin{lemma} \label{key2}
		Assume that $q_1$, $q_2$ and $q_3$ are relatively prime. Assume that
		\begin{equation}\label{g23}
\det \begin{pmatrix}
1 & 1 & 1  \\
\rho^1_{l_1} & \rho^2_{l_2} & \rho^3_{l_3}  \\
\rho^1_{l_1'} & \rho^2_{l_2'} & \rho^3_{l_3'} \\
\end{pmatrix}=0,
		\end{equation}
		where $l_j,l_j^\prime \in \{0,1,\cdots, q_j-1\}, j=1,2,3$. Then  $(l_1,l_2,l_3)$ and  $(l'_1,l'_2,l'_3)$ must  fall into one of the following cases:
		\begin{equation*}
		l_1=l_2=l_3=0;
		\end{equation*}
		\begin{equation*}
		l_1'=l_2'=l_3'=0;
		\end{equation*}
			\begin{equation*}
		l_1=l_1', l_2=l_2', l_3=l_3';
		\end{equation*}
		\begin{equation*}
		l_1=l_1^\prime=0, l_2=l_2^\prime=0;
		\end{equation*}
			\begin{equation*}
		l_1=l_1^\prime=0, l_3=l_3^\prime=0;
		\end{equation*}
			\begin{equation*}
		l_2=l_2^\prime=0, l_3=l_3^\prime=0.
		\end{equation*}
	\end{lemma}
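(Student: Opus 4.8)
The plan is to reformulate the vanishing of the determinant in \eqref{g23} geometrically and then exploit that all the entries lie on the unit circle. Writing $x_i=\rho^i_{l_i}$ and $y_i=\rho^i_{l_i'}$ for $i=1,2,3$, the determinant in \eqref{g23} is exactly the condition that the three affine points $P_i=(x_i,y_i)\in\C^2$ are collinear, the first row of $1$'s playing the role of the homogenizing coordinate. So the task becomes: classify all collinear triples $P_1,P_2,P_3$ subject to $x_i,y_i\in\mu_{q_i}$ (the group of $q_i$-th roots of unity), i.e. $|x_i|=|y_i|=1$. The only number-theoretic input I would use is the coprimality relation \eqref{dckey1}: if $x_i=x_j$ (resp. $y_i=y_j$) for some $i\ne j$, then since $q_i,q_j$ are coprime one must have $l_i=l_j=0$ (resp. $l_i'=l_j'=0$), forcing $x_i=x_j=1$ (resp. $y_i=y_j=1$).

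I would then split the argument according to the first coordinates $x_1,x_2,x_3$. If they are all equal, \eqref{dckey1} applied to the pairs forces $l_1=l_2=l_3=0$, the first listed case. If exactly two coincide, say $x_i=x_j=1$ with the third point having first coordinate different from $1$, then the line through the three collinear points cannot be vertical, so $P_i$ and $P_j$ must share their second coordinate as well; hence $y_i=y_j=1$ by \eqref{dckey1}, $P_i=P_j=(1,1)$, and we land in one of the pair-coincidence cases $l_i=l_i'=0$, $l_j=l_j'=0$. In the remaining situation the $x_i$ are pairwise distinct, so the three collinear points determine a unique non-vertical line $y=mx+b$; if $m=0$ then $y_1=y_2=y_3$ and \eqref{dckey1} gives $l_1'=l_2'=l_3'=0$, the second listed case. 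This isolates the genuinely generic configuration, in which $x_1,x_2,x_3$ are distinct and $m\ne0$.

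The main obstacle is precisely this generic case, where no two points coincide and the purely number-theoretic coincidence argument yields nothing; the key new idea is to use that every point lies on the unit circle. From $|x_i|=1$ and $y_i=mx_i+b$, expanding $|y_i|^2=1$ gives $2\,\mathrm{Re}(m\bar b\,x_i)=1-|m|^2-|b|^2$, a quantity independent of $i$, so $\mathrm{Re}(w\,x_i)$ is the same constant for $i=1,2,3$, where $w=m\bar b$. If $w\ne0$, this forces the three distinct numbers $x_i$ to lie on a single real line meeting the unit circle, which contains at most two points — a contradiction. Hence $w=0$, and since $m\ne0$ we get $b=0$, so the line passes through the origin and $y_i=mx_i$ with $|m|=1$. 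Finally $m=y_i x_i^{-1}=\rho^i_{l_i'-l_i}\in\mu_{q_i}$ simultaneously for all $i$, and a common root of unity of coprime orders must equal $1$; thus $m=1$ and $y_i=x_i$, i.e. $l_i=l_i'$ for every $i$, the third listed case. This exhausts all possibilities, so the determinant can vanish only in the six enumerated ways.
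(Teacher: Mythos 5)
Your proof is correct, but it takes a genuinely different route from the paper's. The paper works directly on the determinant: row reduction turns \eqref{g23} into the product identity $(\rho^1_{l_1}-\rho^3_{l_3})(\rho^2_{l_2'}-\rho^3_{l_3'})=(\rho^2_{l_2}-\rho^3_{l_3})(\rho^1_{l_1'}-\rho^3_{l_3'})$, each difference is factored via $e^{2ix}-e^{2iy}=2ie^{i(x+y)}\sin(x-y)$, and the cases split on whether $(l_2,l_3)=(0,0)$ or $(l_1',l_3')=(0,0)$; in the remaining case the sines are real and nonzero, so the ratio of the phase factors must be real, which forces $\frac{l_1'-l_1}{q_1}+\frac{l_2-l_2'}{q_2}\in\Z$, hence $l_1=l_1'$ and $l_2=l_2'$ by coprimality, and one further row reduction gives $(\rho^3_{l_3'}-\rho^3_{l_3})(\rho^1_{l_1}-\rho^2_{l_2})=0$. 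You instead read \eqref{g23} as collinearity in $\C^2$ of the points $(\rho^i_{l_i},\rho^i_{l_i'})$ and split on coincidences among the first coordinates; your substitute for the sine/phase computation is the unimodularity argument: $|mx_i+b|=|x_i|=1$ makes $\Re(m\bar b\,x_i)$ independent of $i$, and since a real line meets the unit circle in at most two points while the $x_i$ are three distinct unimodular numbers, one gets $m\bar b=0$, so $b=0$ and $m$ is simultaneously a $q_1$-th, $q_2$-th and $q_3$-th root of unity, hence $m=1$. Both arguments rest on the same arithmetic input \eqref{dckey1} (roots of unity of coprime orders agree only at $1$): you invoke it in the coincidence cases and to force $m=1$, the paper invokes it for the nonvanishing of the sines and for the mod-$\Z$ phase relation. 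What your route buys is conceptual transparency and symmetry in the three indices—unimodularity enters exactly once, through the line--circle intersection bound, which is the geometric counterpart of the paper's ``phase ratio is real'' step—while the paper's route is a shorter bare-hands computation needing no geometric dictionary. Your degenerate cases all check out: equal first coordinates force the common value $1$ by coprimality, the vertical line $x=1$ is correctly excluded when exactly two coincide, and $m=0$ yields $l_1'=l_2'=l_3'=0$.
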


\begin{proof}
	By row reduction and \eqref{g23}, one has
		\begin{equation*} 
	\det \begin{pmatrix}
	1 & 1 & 1  \\
	\rho^1_{l_1}- \rho^3_{l_3}& \rho^2_{l_2} -\rho^3_{l_3} & 0 \\
	\rho^1_{l_1'} -\rho^3_{l_3'} & \rho^2_{l_2'}-\rho^3_{l_3'}  & 0\\
	\end{pmatrix}=0,
	\end{equation*}
	and hence
	\begin{equation}\label{g24}
(	\rho^1_{l_1}- \rho^3_{l_3})(\rho^2_{l_2'}-\rho^3_{l_3'} )=( \rho^2_{l_2} -\rho^3_{l_3} ) (	\rho^1_{l_1'} -\rho^3_{l_3'}).
	\end{equation}
Due to the equality $e^{2i x}-e^{2i y}=2ie^{i(x+y)} \sin (x-y)$  and \eqref{g24}, we have that 
		\begin{align}
 e^{ i\pi \left( \frac{l_1}{q_1}+\frac{l_3}{q_3}\right)} & \left(\sin \pi \left( \frac{l_1}{q_1}-\frac{l_3}{q_3}\right)\right) e^{ i\pi \left( \frac{l_2'}{q_2}+\frac{l_3'}{q_3}\right)} \left(\sin \pi \left( \frac{l_2'}{q_2}-\frac{l_3'}{q_3}\right)\right)\nonumber\\=&e^{ i\pi \left( \frac{l_1'}{q_1}+\frac{l_3'}{q_3}\right)} \left(\sin \pi \left( \frac{l_1'}{q_1}-\frac{l_3'}{q_3}\right)\right) e^{ i\pi \left( \frac{l_2}{q_2}+\frac{l_3}{q_3}\right)} \left(\sin \pi \left( \frac{l_2}{q_2}-\frac{l_3}{q_3}\right)\right). \label{g25}
	\end{align}
	
	We prove Lemma \ref{key2} by cases.
	
	{\bf Case 1}: $l_2=l_3=0$ 
	
	In this case, by \eqref{g25}, one must have either
	\begin{equation*}
\sin \pi \left( \frac{l_1}{q_1}-\frac{l_3}{q_3}\right)=0
	\end{equation*}
	or 
		\begin{equation*}
	\sin \pi \left( \frac{l_2'}{q_2}-\frac{l_3'}{q_3}\right)=0.
	\end{equation*}
	By the fact that $q_1$, $q_2$ and $q_3$ are relatively prime and \eqref{dckey1}, one has either
	$l_1=l_3=0$ or $l_2'=l_3'=0$.
	
		{\bf Case 2}: $l_1'=l_3'=0$
		
		In this case, by the similar argument in Case 1, one also has either 	$l_1=l_3=0$ or $l_2'=l_3'=0$.
		
			{\bf Case 3}: $(l_2,l_3)\neq (0,0)$ and $(l_1',l_3')\neq (0,0)$
			
			In this case, the right side of \eqref{g25} is nonzero and then we have
			\begin{equation}\label{g26}
		e^{i\pi \left( \left(\frac{l_1'}{q_1}+\frac{l_3'}{q_3}+ \frac{l_2}{q_2}+\frac{l_3}{q_3}\right)-\left( \frac{l_1}{q_1}+\frac{l_3}{q_3}+\frac{l_2'}{q_2}+\frac{l_3'}{q_3}\right)\right) }\in \R.
			\end{equation}
			This implies that 
			\begin{equation*}
		\frac{l_1'-l_1}{q_1} +\frac{l_2-l_2'}{q_2} =0\mod\Z.
			\end{equation*}
			Therefore by \eqref{dckey1}, we have
			\begin{equation}\label{g27}
		 l_1=l_1' \text{ and } l_2=l_2'.
			\end{equation}
			By \eqref{g23}, \eqref{g27} and row reduction, we have 
				\begin{equation}\label{g28}
			\det \begin{pmatrix}
			1 & 1 & 1  \\
			\rho^1_{l_1} & \rho^2_{l_2} & \rho^3_{l_3} \\
		0 & 0 & \rho^3_{l_3'}-\rho^3_{l_3} \\
			\end{pmatrix}=0,
			\end{equation}
			and hence
			\begin{equation*}
			(\rho^3_{l_3'}-\rho^3_{l_3} )(	\rho^1_{l_1}-\rho^2_{l_2} )=0.
			\end{equation*}
			This implies either $l_3=l_3'$ or $l_1=l_2=0$.
			
\end{proof}
\section{Technical preparations, II}\label{S4}

For $a=(a_1,a_2,\cdots,a_d) \in \Z^d$ and $z^a=z_1^{a_1}z_2^{a_2}\cdots z_d^{a_d}$, denote by $|a|=a_1+a_2+\cdots+a_d$ the degree of $z^a$. 
\begin{theorem}\label{key1}
	Assume that  $V$ and $Y$ are Fermi isospectral. Then
	\begin{equation}\label{g21}
	[V]=[Y]
	\end{equation}
	and  for all possible $z\in\C^d$,
	\begin{equation}\label{g55}
	\sum_{ n\in W, n^\prime \in W}
	\frac{|\hat{V}(n-n^\prime)|^2}{(\sum_{j=1}^d \rho^j_{n_j}z_j) (\sum_{j=1}^d \rho^j_{n^\prime_j}z_j) } \equiv \sum_{ n\in W, n^\prime \in W}
	\frac{|\hat{Y}(n-n^\prime)|^2}{(\sum_{j=1}^d \rho^j_{n_j}z_j) (\sum_{j=1}^d \rho^j_{n^\prime_j}z_j) }.
	\end{equation}
\end{theorem}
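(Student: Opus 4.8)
The plan is to reduce the Fermi--isospectrality hypothesis to a single determinant identity and then extract the two conclusions by isolating, respectively, the linear-in-$V$ and quadratic-in-$V$ parts of that identity. First I would translate the hypothesis into matrix form: by Definition \ref{fermiiso} and Lemma \ref{le1} there is a $\lambda_0$ with $\mathcal{P}_V(z,\lambda_0)=\mathcal{P}_Y(z,\lambda_0)$, and substituting $z_j\mapsto z_j^{q_j}$ as in \eqref{gtp} and invoking Lemma \ref{lesep} turns this into
$$\det\!\big(A+B_V-\lambda_0 I\big)=\det\!\big(A+B_Y-\lambda_0 I\big),$$
an identity of Laurent polynomials in $z$, where $A$ is the diagonal matrix \eqref{A} with entries $a_n(z)=\sum_{j=1}^d(\rho^j_{n_j}z_j+\tfrac{1}{\rho^j_{n_j}z_j})$ and $B_V$ is given by \eqref{gb}. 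On the Zariski-open set where $A-\lambda_0 I$ is invertible I would factor out the ($V$-independent) factor $\det(A-\lambda_0 I)$ and reduce to $\det(I+(A-\lambda_0 I)^{-1}B_V)=\det(I+(A-\lambda_0 I)^{-1}B_Y)$.

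Taking logarithms and using $\log\det(I+M)=\mathrm{Tr}\log(I+M)=\sum_{k\ge1}\frac{(-1)^{k-1}}{k}\mathrm{Tr}(M^k)$ with $M=(A-\lambda_0 I)^{-1}B_V$, the $k$-th term is a sum of products of $k$ Fourier coefficients $\hat V$ over $k$ factors $a_n-\lambda_0$. In particular the $k=1$ term equals $[V]\sum_n(a_n-\lambda_0)^{-1}$ (since $B_V(n;n)=\hat V(0)=[V]$), and the $k=2$ term equals $-\tfrac12\sum_{n,n'}|\hat V(n-n')|^2[(a_n-\lambda_0)(a_{n'}-\lambda_0)]^{-1}$ (using $\hat V(n'-n)=\overline{\hat V(n-n')}$, as $V$ is real). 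The key device is then to rescale $z\mapsto tz$ and expand in powers of $t^{-1}$: since $a_n(tz)-\lambda_0=t\sigma_n(z)+O(1)$ with $\sigma_n(z)=\sum_{j=1}^d\rho^j_{n_j}z_j$, the $k$-th trace is $O(t^{-k})$, so the linear-in-$V$ data first appears at order $t^{-1}$ and the quadratic-in-$V$ data at order $t^{-2}$, with no contribution from cubic or higher terms at these two orders. The series converges and all manipulations are valid for $t$ large with all $\sigma_n(z)\neq0$.

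Matching the $t^{-1}$ coefficients gives $[V]\,g(z)=[Y]\,g(z)$ with $g(z)=\sum_n 1/\sigma_n(z)$. To conclude \eqref{g21} I must know $g\not\equiv0$; this follows from a residue argument, since the linear forms $\sigma_n$ are pairwise non-proportional (a consequence of the coprimality of $q_1,\dots,q_d$ together with \eqref{dckey1}), so each $1/\sigma_n$ contributes a simple pole with nonzero residue along a distinct hyperplane and no cancellation is possible; hence $[V]=[Y]$. Finally I match the $t^{-2}$ coefficients: these receive the subleading part of the $k=1$ term, which is $[V]$ times a $V$-independent function, plus the leading part of the $k=2$ term, namely $-\tfrac12\sum_{n,n'}|\hat V(n-n')|^2/(\sigma_n\sigma_{n'})$. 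Because $[V]=[Y]$ is already established, the $k=1$ contributions cancel between $V$ and $Y$, leaving precisely \eqref{g55}.

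The main obstacle is the clean separation of orders: a single determinant identity a priori couples all powers of $V$, and even after the rescaling the linear data re-enters at order $t^{-2}$ through the subleading expansion of the resolvent. The point that makes the argument work is the order of operations---first extract $[V]=[Y]$ at order $t^{-1}$, and only then read off the quadratic identity at order $t^{-2}$, where the residual linear contribution is annihilated precisely by $[V]=[Y]$. A secondary technical hurdle is the non-vanishing of $g$, which rests on the non-proportionality of the forms $\sigma_n$ and hence on the coprimality assumptions. The same computation can alternatively be carried out purely algebraically via the cofactor expansion $\det(A-\lambda_0 I+B_V)=\sum_{S\subseteq W}\det((B_V)_S)\prod_{n\notin S}(a_n-\lambda_0)$, comparing the coefficients of $t^{Q-1}$ and $t^{Q-2}$, which avoids any convergence issue and keeps everything within the theory of Laurent polynomials.
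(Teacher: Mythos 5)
Your proposal is correct and follows essentially the same route as the paper: reducing Fermi isospectrality via Lemma \ref{le1} and Lemma \ref{lesep} to the single identity $\det(A+B_V-\lambda_0 I)=\det(A+B_Y-\lambda_0 I)$ and then extracting the two top homogeneous orders in $z$ is exactly the paper's computation in \eqref{gtmp}--\eqref{g4} --- your $k=1$ and $k=2$ trace terms are precisely the paper's $h^1_V$ and $h^2_V-\tilde h^2_V$, and your ``purely algebraic'' principal-minor variant, comparing coefficients of $t^{Q-1}$ and $t^{Q-2}$, is literally the paper's argument without the (harmless but avoidable) convergence bookkeeping of the $\log\det$ expansion. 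The only real difference is cosmetic, in showing the coefficient $g(z)=\sum_{n\in W}1/\sigma_n(z)\not\equiv 0$: you argue via pairwise non-proportionality of the linear forms $\sigma_n$ (valid, by coprimality of the $q_j$, and note it genuinely needs $d\geq 2$) so that the simple poles cannot cancel, while the paper implements the same pole mechanism concretely by setting $z_3=\cdots=z_d=0$ and letting $z_1\to -z_2$, invoking \eqref{dckey1}.
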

\begin{proof}
	Assume that $ F_{\lambda_0} (V)=F_{\lambda_0} (Y)$ for some $\lambda_0\in\C$. By Lemma \ref{le1}, one has
	\begin{equation}\label{equ1}
	\mathcal{P}_V(z,\lambda_0)=\mathcal{P}_{Y}(z,\lambda_0),
	\end{equation}
	and	 hence
	\begin{equation}\label{equ28}
	\tilde{\mathcal{P}}_V(z,\lambda_0)=\tilde{\mathcal{P}}_{Y}(z,\lambda_0).
	\end{equation}
	For any $n=(n_1,n_2,\cdots,n_d)\in W$, let
	\begin{equation}
	t_{n}(z,x)=x+\sum_{j=1}^d \left(\rho^j_{n_j}z_j+\frac{1}{\rho^j_{n_j}z_j} \right).
	\end{equation}
	Clearly, $t_{n}(z,[V]-\lambda)$ is the $n$-th diagonal entry of the matrix $A+B_V$.
	By Lemma \ref{lesep}, 
	direct calculations imply that 
	\begin{align}
	\tilde{\mathcal{P}}_V(z,\lambda)=&\det (A+B_V-\lambda I)\nonumber\\
	=&
	\prod_{n\in W}t_{n}(z,[V]-\lambda)-\frac{1}{2} \sum_{ n\in W, n^\prime \in W\atop{n\neq n^\prime} }
	\frac{\prod_{n\in W}t_{n}(z,[V]-\lambda) }{t_{n^\prime}(z,[V]-\lambda)t_{n}(z,[V]-\lambda)} |B_V(n;n^\prime)|^2\nonumber\\
	&+\text{ products of at most } q_1q_2\cdots q_d-3 \text{ terms of } t_{n}(z,[V]-\lambda).\label{gtmp}
	\end{align}
	Let 
	\begin{equation*}
	h(z )=\prod_{n\in W}\left(\sum_{j=1}^d \rho^j_{n_j} z_j\right),
	\end{equation*}
	and 
	$h^1_V(z,\lambda)$ be all terms in $\tilde{\mathcal{P}}_V(z,\lambda)$ consisting of $z^a$ with $|a|=q_1q_2\cdots q_d-1$.

	By \eqref{gtmp}, one has  that	$h^1_V(z,\lambda)$  must come from $\prod_{n\in W}t_{n}(z,[V]-\lambda)$ and hence  
	\begin{equation}\label{equ2}
	h^1_V(z,\lambda)=( [V]-\lambda)\left(\sum_{n\in W} \frac{h(z)}{ \sum_{j=1}^d \rho^j_{n_j}z_j}\right).
	\end{equation}
	
		By \eqref{equ28} and \eqref{equ2}, one has
	\begin{equation}\label{g1}
	\sum_{n\in W} \frac{[V]-\lambda_0}{ \sum_{j=1}^d \rho^j_{n_j}z_j}\equiv \sum_{n\in W} \frac{[Y]-\lambda_0}{ \sum_{j=1}^d \rho^j_{n_j}z_j}.
	\end{equation}

	Set $z_3=z_4=\cdots=z_{d}=0$. Letting $z_{1}\to -z_2$ and
	by \eqref{g1} and \eqref{dckey1},
	one has
	\begin{equation}\label{g2}
	[V]=[Y].
	\end{equation}
	
	Let  ${h}_V^2(z,\lambda)$ ($\tilde{h}_V^2(z,\lambda)$) be  all terms in  $\tilde{\mathcal{P}}_V(z,\lambda)$ ($\prod_{n\in W}t_{n}(z,[V]-\lambda)$)
	consisting of $z^a$ with $|a|=q_1q_2\cdots q_d-2$.
	
	By \eqref{g2}, one has that 
	$\tilde{h}_V^2(z,\lambda_0)\equiv\tilde{h}_Y^2(z,\lambda_0)$. Therefore, by \eqref{equ28}, we have that 
	\begin{equation}\label{g22}
	h^2_V(z,\lambda_0) -\tilde{h}_V^2(z,\lambda_0)=	h^2_Y(z,\lambda_0) -\tilde{h}_Y^2(z,\lambda_0).
	\end{equation}
	By \eqref{gtmp}, one has that
	\begin{equation}\label{equ3}
	h^2_V(z,\lambda)-	\tilde{h}^2_V(z,\lambda)=\frac{h(z)}{2} \sum_{ n\in W, n^\prime \in W\atop{n\neq n^\prime} }
	\frac{-|B_V(n;n^\prime)|^2}{(\sum_{j=1}^d \rho^j_{n_j}z_j) (\sum_{j=1}^d \rho^j_{n^\prime_j}z_j) }.
	\end{equation}

	By \eqref{g22} and \eqref{equ3}, one has
	\begin{equation}\label{g3}
	\sum_{ n\in W, n^\prime \in W\atop{n\neq n^\prime} }
	\frac{|B_V(n;n^\prime)|^2}{(\sum_{j=1}^d \rho^j_{n_j}z_j) (\sum_{j=1}^d \rho^j_{n^\prime_j}z_j) } \equiv \sum_{ n\in W, n^\prime \in W\atop{n\neq n^\prime} }
	\frac{|B_Y(n;n^\prime)|^2}{(\sum_{j=1}^d \rho^j_{n_j}z_j) (\sum_{j=1}^d \rho^j_{n^\prime_j}z_j) }  ,
	\end{equation}
	and hence
	\begin{equation}\label{g4}
	\sum_{ n\in W, n^\prime \in W\atop{n\neq n^\prime} }
	\frac{|\hat{V}(n-n^\prime)|^2}{(\sum_{j=1}^d \rho^j_{n_j}z_j) (\sum_{j=1}^d \rho^j_{n^\prime_j}z_j) } \equiv \sum_{ n\in W, n^\prime \in W\atop{n\neq n^\prime} }
	\frac{|\hat{Y}(n-n^\prime)|^2}{(\sum_{j=1}^d \rho^j_{n_j}z_j) (\sum_{j=1}^d \rho^j_{n^\prime_j}z_j) } .
	\end{equation}
	By \eqref{g2} and \eqref{g4}, one has \eqref{g55}.
\end{proof}
\begin{theorem}\label{key4}
	Assume $V$ and $ Y$ are Fermi isospectral.   Let  $d_1$ be such that $2\leq d_1 \leq d$. Then we have that 
	\begin{align}
	\sum_{ \substack { l'\in W\\ l'_m=0,1\leq m\leq d_1}}| \hat{V}(l')|^2 
	=	\sum_{ \substack { l'\in W\\ l'_m=0,1\leq m\leq d_1}} |\hat{Y}(l')|^2, \label{g4111}
	\end{align}
	where $l'=(l'_1,l'_2,\cdots,l'_d)$.
\end{theorem}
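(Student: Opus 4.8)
The plan is to start from identity \eqref{g55} of Theorem \ref{key1} and reorganize its double sum so that the dependence on the potentials is separated from a fixed family of rational functions. Throughout, write $L_n(z)=\sum_{j=1}^d \rho^j_{n_j}z_j$ for $n\in W$, and note the multiplicativity $\rho^j_{a+b}=\rho^j_a\rho^j_b$, which gives $L_{n'+m}(z)=\sum_{j=1}^d \rho^j_{n'_j}\rho^j_{m_j}z_j$ for any $m$. For fixed $m\in W$, the assignment $n'\mapsto n=(n'+m)\bmod\Gamma$ is a bijection of $W$, and since $\hat V$ is $\Gamma$-periodic one has $\hat V(n-n')=\hat V(m)$. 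Reindexing the left-hand side of \eqref{g55} by $m=n-n'$ therefore yields
\begin{equation*}
\sum_{n,n'\in W}\frac{|\hat V(n-n')|^2}{L_n(z)L_{n'}(z)}=\sum_{m\in W}|\hat V(m)|^2\,g_m(z),\qquad g_m(z):=\sum_{n'\in W}\frac{1}{L_{n'}(z)L_{n'+m}(z)},
\end{equation*}
and likewise for $Y$. The crucial point is that $g_m$ is \emph{independent} of the potential, so subtracting the two identities gives $\sum_{m\in W}c_m\,g_m(z)\equiv 0$ with $c_m:=|\hat V(m)|^2-|\hat Y(m)|^2$.

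Next I would collapse the coordinates beyond $d_1$. Setting $z_{d_1+1}=\cdots=z_d=0$, each $L_{n}(z)$ reduces to $\sum_{j=1}^{d_1}\rho^j_{n_j}z_j$, which depends only on $(n_1,\dots,n_{d_1})$; consequently $g_m$ restricts to $(\prod_{j>d_1}q_j)\,\tilde g_{\bar m}(z_1,\dots,z_{d_1})$, depending only on $\bar m=(m_1,\dots,m_{d_1})$, where $\tilde g_{\bar m}(z)=\sum_{\bar n'}\frac{1}{L_{\bar n'}(z)L_{\bar n'+\bar m}(z)}$ and the sum runs over first-$d_1$-coordinate cells. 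Grouping $c_m$ accordingly, the restricted identity becomes $\sum_{\bar m}C_{\bar m}\,\tilde g_{\bar m}\equiv 0$, where $C_{\bar m}=\sum_{m\in W:\,(m_1,\dots,m_{d_1})=\bar m}c_m$. The quantity $C_{\bar 0}$ with $\bar 0=(0,\dots,0)$ is exactly the difference of the two sides of \eqref{g4111}, so it suffices to prove $C_{\bar 0}=0$.

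To isolate $C_{\bar 0}$ I would exploit the pole structure along a well-chosen line. Fix generic constants $a_2,\dots,a_{d_1}$ and take the curve $z_1(t)=t-\sum_{j=2}^{d_1}a_j$, $z_j(t)=a_j$ for $2\le j\le d_1$, so that $L_{\bar 0}(z(t))=\sum_{j=1}^{d_1}z_j(t)=t$. The heart of the matter is to check that $t=0$ is a zero of $L_{\bar n'}(z(t))$ \emph{only} for $\bar n'=\bar 0$: indeed $L_{\bar n'}(z(0))=\sum_{j=2}^{d_1}(\rho^j_{n'_j}-\rho^1_{n'_1})a_j$, a linear form in the $a_j$ that vanishes identically precisely when $\rho^j_{n'_j}=\rho^1_{n'_1}$ for all $j$, which by \eqref{dckey1} and the coprimality of the $q_j$ forces $\bar n'=\bar 0$; for $\bar n'\neq\bar 0$ the form is nonzero, hence nonvanishing for generic $a_j$. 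Granting this, every $\tilde g_{\bar m}$ with $\bar m\neq\bar 0$ has at worst a simple pole at $t=0$ (its singular terms come from $\bar n'=\bar 0$ or $\bar n'+\bar m=\bar 0$, never both), while $\tilde g_{\bar 0}=\sum_{\bar n'}L_{\bar n'}^{-2}$ has the double pole $t^{-2}+O(t^{-1})$ coming from $\bar n'=\bar 0$. Comparing the coefficient of $t^{-2}$ on both sides of $\sum_{\bar m}C_{\bar m}\tilde g_{\bar m}\equiv 0$ then yields $C_{\bar 0}=0$, which is \eqref{g4111}.

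The main obstacle, and the step that uses the arithmetic hypotheses in an essential way, is the degeneration analysis along the curve: one must guarantee that for a suitable choice of the $a_j$ exactly one linear factor $L_{\bar n'}$ degenerates at $t=0$ and that no $L_{\bar n'+\bar m}$ accidentally degenerates together with it. This is precisely where the non-vanishing $\rho^1_{n_1}-\rho^2_{n_2}\neq 0$ from \eqref{dckey1} (valid because $q_1,\dots,q_{d_1}$ are relatively prime) enters; the remaining bad configurations are cut out by finitely many nontrivial linear conditions on $(a_2,\dots,a_{d_1})$, which generic constants avoid. When $d_1=2$ no genericity beyond $a_2\neq 0$ is needed.
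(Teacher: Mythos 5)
Your proposal is correct and follows essentially the same route as the paper: after the same reindexing of \eqref{g55} and the restriction $z_{d_1+1}=\cdots=z_d=0$, your generic point $z(0)$ on the hyperplane $z_1+\cdots+z_{d_1}=0$ avoiding all the other linear forms $\sum_{j\le d_1}\rho^j_{n_j}z_j$ is precisely the paper's point $z^0$ (whose existence is likewise justified by coprimality and \eqref{dckey1}), and extracting the $t^{-2}$ Laurent coefficient along your curve is the same computation as the paper's multiplying \eqref{g29111} by $(z_1+\cdots+z_{d_1})^2$ and letting $z\to z^0$ to isolate the unique double-pole term with $\bar n=\bar 0$ and $\bar l'=\bar 0$. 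Your preliminary grouping of the double sum into the potential-independent functions $g_m$ is only a cosmetic repackaging of the paper's \eqref{g291}.
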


\begin{proof}
	
		By \eqref{g55}, one has
	\begin{equation}\label{g291}
	\sum_{ \substack {n\in W, {l}' \in W}}
	\frac{|\hat{V}({l}')|^2}{(\sum_{j=1}^d\rho^j_{n_j}z_j) (\sum_{j=1}^d \rho^j_{n_j+{l}'_j}z_j) } \equiv 	\sum_{ \substack {n\in W, {l}' \in W}} \frac{|\hat{Y}({l}')|^2}{(\sum_{j=1}^d\rho^j_{n_j}z_j) (\sum_{j=1}^d \rho^j_{n_j+{l}'_j}z_j) }.
	\end{equation}
	Since $q_j$, $j=1,2,\cdots,d$ are relatively prime, one has that for any $(n_1,n_2,\cdots,n_{d_1})\neq (0,0,\cdots,0)\mod \Gamma$, 
	vectors 	$(1,1,\cdots, 1)$ and $(\rho_{n_1}^1,\rho_{n_2}^2,\cdots,\rho_{n_{d_1}}^{d_1})$ are linearly independent.
	By some basic facts in linear algebra, one has that  there exists $z_m^0$, $m=1,2,\cdots,d_1$ such that  for any 
	$(n_1,n_2,\cdots,n_{d_1})\neq (0,0,\cdots,0)\mod\Gamma$,
	\begin{equation}\label{g70}
	\sum_{m=1}^{d_1}\rho_{n_m}^mz_m^0\neq 0 
	\end{equation}
	and
	\begin{equation}\label{g71}
	\sum_{m=1}^{d_1}z_m^0=0.
	\end{equation}
	In \eqref{g291}, 	set $z_{d_1+1}=z_{d_1+2}=\cdots=z_d=0$.  Therefore, one has
		\begin{equation}\label{g29111}
	\sum_{ \substack {0\leq n_m\leq q_m-1\\ 1\leq m\leq d_1\\ {l}' \in W}}
	\frac{|\hat{V}({l}')|^2}{(\sum_{j=1}^{d_1}\rho^j_{n_j}z_j) (\sum_{j=1}^{d_1} \rho^j_{n_j+{l}'_j}z_j) } \equiv \sum_{ \substack {0\leq n_m\leq q_m-1\\ 1\leq m\leq d_1\\ {l}' \in W}} \frac{|\hat{Y}({l}')|^2}{(\sum_{j=1}^{d_1}\rho^j_{n_j}z_j) (\sum_{j=1}^{d_1} \rho^j_{n_j+{l}'_j}z_j) }.
	\end{equation}
	 Let $z_m\to z_m^0$, $m=1,2,\cdots, d_1$. 
	By \eqref{g70} and \eqref{g71},  leading terms (second order terms) in the sum of \eqref{g29111}  consist  of  $\frac{1}{(\sum_{j=1}^{d_1}\rho^j_{n_j}z_j) (\sum_{j=1}^{d_1} \rho^j_{n_j+{l}'_j}z_j)} $ with $n_1=n_2=\cdots=n_{d_1}=0$ and ${l}'_1={l}'_2=\cdots =l'_{d_1} =0$. Therefore, one has \footnote{ The readers can  also understand the    following formula without mentioning leading terms:  multiplying the equality \eqref{g29111} by $(z_1+z_2+\cdots +z_{d_1})^2$ and then letting $z_m\to z_m^0$, $m=1,2,\cdots,d_1$.  }
	\begin{equation}\label{g42}
	\sum_{ \substack { l'\in W\\ l'_m=0,1\leq m\leq d_1}}| \hat{V}(l')|^2 
=	\sum_{ \substack { l'\in W\\ l'_m=0,1\leq m\leq d_1}} |\hat{Y}(l')|^2.
	\end{equation}

\end{proof}

\begin{theorem}\label{key3}
	Let $d\geq 3$, and  $V$ and $ Y$  be  Fermi isospectral. Assume that  $l_i\in\{1,2,\cdots, q_i-1\}$, $i=1,2$.  Then  
	\begin{align}
	\sum_{ \substack {l'\in W\\  l'_1\in\{ l_1,q_1-l_1\}\\ l'_2\in\{l_2,q_2-l_2\}}}| \hat{V}(l')|^2 =	\sum_{ \substack {l'\in W\\ l'_1\in\{ l_1,q_1-l_1\}\\ l'_2\in\{l_2,q_2-l_2\}}}| \hat{Y}(l')|^2 , \label{g411}
	\end{align}
		where $l'=(l'_1,l'_2,\cdots,l'_d)$.
\end{theorem}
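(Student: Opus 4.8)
The plan is to peel the identity \eqref{g55} down to three active coordinates and then isolate a single ``mixed'' double pole whose position is controlled by Lemma \ref{key2}. First I would start from the equivalent form \eqref{g291} and specialize $z_4=z_5=\cdots=z_d=0$. Writing $L(n_1,n_2,n_3)=\rho^1_{n_1}z_1+\rho^2_{n_2}z_2+\rho^3_{n_3}z_3$, after this specialization every denominator depends only on the first three coordinates, so summing the numerators over the now-inert indices $n_4,\dots,n_d$ and $l'_4,\dots,l'_d$ turns \eqref{g291} into the rational identity
\begin{equation*}
\sum_{\substack{(n_1,n_2,n_3)\\ (l'_1,l'_2,l'_3)}} \frac{T_V(l'_1,l'_2,l'_3)}{L(n_1,n_2,n_3)\,L(n_1+l'_1,n_2+l'_2,n_3+l'_3)} \equiv (\text{the same expression with } Y),
\end{equation*}
where $T_V(l'_1,l'_2,l'_3)=q_4\cdots q_d\sum_{l'_4,\dots,l'_d}|\hat V(l'_1,\dots,l'_d)|^2$ and similarly for $Y$. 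The assertion \eqref{g411} is precisely a statement about these reduced coefficients with prescribed first two entries, so it suffices to analyze this three-variable identity.

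Next, and this is the heart of the matter, I would fix $\bar n=(0,0,0)$ and a target $\bar n'=(l_1,l_2,l_3)$ with $l_1,l_2$ as in the hypothesis and $l_3$ arbitrary. The forms $L(\bar n)=z_1+z_2+z_3$ and $L(\bar n')$ are linearly independent: proportionality would force $\rho^1_{l_1}=\rho^2_{l_2}$, impossible by \eqref{dckey1} because $l_1\neq 0$. Hence their common zero set is a line through the origin in $\C^3$, and I pick any nonzero point $z^0$ on it. The crucial claim is that \emph{no other} denominator vanishes at $z^0$: for $n\neq\bar n,\bar n'$ the form $L(n)$ vanishes on this line exactly when the $3\times3$ determinant with rows $(1,1,1)$, $(\rho^1_{l_1},\rho^2_{l_2},\rho^3_{l_3})$ and $(\rho^1_{n_1},\rho^2_{n_2},\rho^3_{n_3})$ is zero, and Lemma \ref{key2} lists all such coincidences. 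Because $l_1\neq 0$ and $l_2\neq 0$, every degenerate case in that lemma is excluded, leaving only $n=\bar n$ and $n=\bar n'$. Thus at $z^0$ exactly two denominators, $L(\bar n)$ and $L(\bar n')$, vanish; this is the step where $d\geq3$ and $l_1,l_2\neq0$ are indispensable, and I expect it to be the main obstacle.

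With the geometry in place, I would read off the principal part at $z^0$ using local coordinates $u=L(\bar n)$ and $w=L(\bar n')$. A term $T_V(l')/(L(n)L(n+l'))$ can contribute a genuine $u^{-1}w^{-1}$ singularity only if $\{n,n+l'\}=\{\bar n,\bar n'\}$, which happens exactly for $l'=\bar n'$ and $l'=-\bar n'$; the choice $l'=0$ produces the pure poles $u^{-2}$ and $w^{-2}$ with coefficient $T_V(0)$, while every other term has at most a simple pole along $\{u=0\}$ or $\{w=0\}$ and hence no $u^{-1}w^{-1}$ part. Extracting the $u^{-1}w^{-1}$ coefficient on both sides by the iterated residue $\mathrm{Res}_{w=0}\mathrm{Res}_{u=0}$ (the pure $u^{-2},w^{-2}$ terms match automatically, since $T_V(0)=T_Y(0)$ is the case $d_1=3$ of Theorem \ref{key4}) gives
\begin{equation*}
T_V(l_1,l_2,l_3)+T_V(q_1-l_1,q_2-l_2,q_3-l_3)=T_Y(l_1,l_2,l_3)+T_Y(q_1-l_1,q_2-l_2,q_3-l_3).
\end{equation*}

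Finally I would sum this over $l_3\in\{0,1,\dots,q_3-1\}$; reindexing $l_3\mapsto q_3-l_3$ in the second summand turns the two sides into the free ``tail'' sums $\sum_{l'_3,\dots,l'_d}|\hat V(\cdot)|^2$ over the index blocks $(l_1,l_2,\cdot)$ and $(q_1-l_1,q_2-l_2,\cdot)$ (the common factor $q_4\cdots q_d$ cancels). Running the identical construction once more with $\bar n'=(l_1,q_2-l_2,l_3)$ — legitimate since $q_2-l_2\in\{1,\dots,q_2-1\}$ — supplies the remaining blocks $(l_1,q_2-l_2,\cdot)$ and $(q_1-l_1,l_2,\cdot)$. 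Adding the two results assembles the sum over all of $\{l_1,q_1-l_1\}\times\{l_2,q_2-l_2\}$, which is exactly \eqref{g411}.
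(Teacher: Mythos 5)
Your proposal is correct and takes essentially the same route as the paper's proof: specialize to $z_4=\cdots=z_d=0$, choose $z^0$ solving the system \eqref{g31}, invoke Lemma \ref{key2} (with $l_1\neq 0$, $l_2\neq 0$ killing all degenerate cases) so that only the two denominators attached to $(0,0,0)$ and $(l_1,l_2,l_3)$ vanish at $z^0$, extract the second-order mixed-pole coefficient to obtain the two-block identity \eqref{g4112}, and sum over $l_3$. Your iterated-residue extraction and the explicit second run with $(l_1,q_2-l_2)$ are just tidier packagings of the paper's ``leading terms'' limit (after removing the $l'=(0,0,0)$ part via Theorem \ref{key4}) and of its final step passing from \eqref{g4112} to \eqref{g411}.
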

\begin{proof}

 Choose any $ l_3\in\{0,1,\cdots, q_3-1\}$.  By the assumption, one has that  $l_1 $ and $l_2$ are non-zero.
  By Lemma  \ref{key2}, for any $l'_m\in\{0,1,\cdots, q_m-1\}$, $m=1,2,3$  with $ (l_1',l_2',l_3')\neq (0,0,0)$  and $ (l_1',l_2',l_3')\neq (l_1,l_2,l_3)$, one has
\begin{equation}\label{g30}
\det \begin{pmatrix}
1 & 1 & 1  \\
\rho^1_{l_1} & \rho^2_{l_2} & \rho^3_{l_3}  \\
\rho^1_{l_1'} & \rho^2_{l_2'} & \rho^3_{l_3'} \\
\end{pmatrix}\neq 0.
\end{equation}

Let $z^0_1$, $z^0_2$ and $z_3^0$ be a nonzero solution of the linear system
\begin{equation}\label{g31}
z_1^0+z_2^0+z_3^0=0, \rho_{l_1}^1 z_1^0+\rho_{l_2}^2z_2^0+\rho_{l_3}^3z_3^0=0.
\end{equation}
By \eqref{g30},
 for any $l'_m\in\{0,1,\cdots, q_m-1\}$, $m=1,2,3$  with $ (l_1',l_2',l_3')\neq (0,0,0)$  and $ (l_1',l_2',l_3')\neq (l_1,l_2,l_3)$, one has
\begin{equation}\label{g32}
\rho_{l_1'}^1 z_1^0+\rho_{l_2'}^2z_2^0+\rho_{l_3'}^3z_3^0\neq 0.
\end{equation}
When $d\geq 4$, let $z_4=z_5=\cdots=z_d=0$. Then by \eqref{g291}, we have that
	\begin{equation}\label{g2911}
\sum_{ \substack {0\leq n_m\leq q_m-1\\m=1,2,3\\ {l}' \in W}}
\frac{|\hat{V}({l}')|^2}{(\sum_{j=1}^3\rho^j_{n_j}z_j) (\sum_{j=1}^3 \rho^j_{n_j+l'_j}z_j) } \equiv \sum_{ \substack {0\leq n_m\leq q_m-1\\m=1,2,3\\ l' \in W}} \frac{|\hat{Y}({l}')|^2}{(\sum_{j=1}^3\rho^j_{n_j}z_j) (\sum_{j=1}^3 \rho^j_{n_j+l'_j}z_j) }.
\end{equation}

Applying Theorem \ref{key4} with $d_1=3$, one has that  for  any fixed  $n_1, n_2$ and $n_3$,  
\begin{equation}\label{g76}
\sum_{ \substack { {l}' \in W\\ l_m'=0,m=1,2,3}}
\frac{|\hat{V}({l}')|^2}{(\sum_{j=1}^3\rho^j_{n_j}z_j) (\sum_{j=1}^3 \rho^j_{n_j+{l}'_j}z_j) }=\sum_{ \substack { {l}' \in W\\ l_m'=0,m=1,2,3}}
\frac{|\hat{Y}({l}')|^2}{(\sum_{j=1}^3\rho^j_{n_j}z_j) (\sum_{j=1}^3 \rho^j_{n_j+{l}'_j}z_j) }.
\end{equation}
By \eqref{g2911} and \eqref{g76}, one has
        	\begin{equation}\label{g291111}
        \sum_{ \substack {0\leq n_m\leq q_m-1\\m=1,2,3\\ l' \in W\\(l'_1,l'_2,l'_3)\neq(0,0,0)}}
        \frac{|\hat{V}({l}')|^2}{(\sum_{j=1}^3\rho^j_{n_j}z_j) (\sum_{j=1}^3 \rho^j_{n_j+l'_j}z_j) } \equiv \sum_{ \substack {0\leq n_m\leq q_m-1\\m=1,2,3\\ l' \in W\\(l'_1,l'_2,l'_3)\neq(0,0,0)}} \frac{|\hat{Y}({l}')|^2}{(\sum_{j=1}^3\rho^j_{n_j}z_j) (\sum_{j=1}^3 \rho^j_{n_j+l'_j}z_j) }.
        \end{equation}
In \eqref{g291111}, 	  let $z_m\to z_m^0$, $m=1,2,3$. 
By \eqref{g31} and \eqref{g32}, leading terms (second order terms) in the sum of \eqref{g291111}  consist of $\frac{1}{ (\sum_{j=1}^{3}\rho^j_{n_j}z_j) (\sum_{j=1}^3 \rho^j_{n_j+{l}'_j}z_j) }$ that satisfies  one of the following cases: 
\begin{equation}\label{g74}
n_1=n_2=n_3=0, {l}'_1=l_1, {l}'_2=l_2,{l}'_3=l_3,
\end{equation}
\begin{equation}\label{g75}
n_1=l_1, n_2=l_2, n_3=l_3, {l}'_1=q_1-l_1, {l}'_2=q_2-l_2, l_3'=\left\{
\begin{array}{@{}ll@{}}
0, & \text{if}\ l_3=0 \\
q_3-l_3, & \text{otherwise}
\end{array}\right .
\end{equation}
By \eqref{g291111}-\eqref{g75},
we have that 
	\begin{align}
\sum_{ \substack {l'\in W\\ l'_m=l_m\\m=1,2,3}}| \hat{V}(l')|^2 + \sum_{ \substack {l'\in W\\ l'_m=q_m-l_m\\m=1,2,3}} |\hat{V}(l')|^2=\sum_{ \substack {l'\in W\\ l'_m=l_m\\m=1,2,3}}| \hat{Y}(l')|^2 + \sum_{ \substack {l'\in W\\ l'_m=q_m-l_m\\m=1,2,3}} |\hat{Y}(l')|^2. \label{g4112}
\end{align}
Since \eqref{g4112} holds for arbitrary $l_3$, one has \eqref{g411}.
 
\end{proof}

\section{Proof of Theorem \ref{thmmain2}}\label{S5}
\begin{proof}
	By Lemma \ref{leSeparable1}, it suffices to show that 
	 for any $l\in W$   with  at least two of 	$ \tilde{l}_{j},$ $j=1,2,\cdots, r$  non-zero,   $
	\hat{V}(l)=0.$ Without loss of generality assume $\tilde{l}_1=(l_1,\cdots,l_{d_1})$ and $\tilde{l}_2=(l_{d_1+1},\cdots,l_{d_1+d_2})$ are non-zero.   Without loss of generality, we further assume that $l_1\neq 0$ and 
	$l_{d_1+1}\neq 0$. Since $Y$ is $(d_1,d_2,\cdots, d_r)$ separable, by Lemma \ref{leSeparable1}  one has that 
	that for any $l'\in W$  with $l'_1 \neq 0$ and $l'_{d_1+1} \neq 0$, 
	\begin{equation*}
\hat{Y}(l') =0
	\end{equation*}
	and hence
	\begin{equation}\label{g68}
\sum_{ \substack { l'\in W \\l'_1=l_1\\ l'_{d_1+1}=l_{d_1+1} }}| \hat{Y}(l')|^2+ \sum_{ \substack { l'\in W \\l'_1=q_1-l_1\\ l'_{d_1+1}=q_{d_1+1}-l_{d_1+1} }}  |\hat{Y}(l')|^2=0.
	\end{equation}

	Applying  Theorem 	\ref{key3}  with possible permutations
 and \eqref{g68}, we have 
\begin{equation}
\sum_{ \substack { l'\in W \\{l'}_1=l_1\\ l'_{d_1+1}=l_{d_1+1} }}| \hat{V}(l')|^2+ \sum_{ \substack { l'\in W \\{l'}_1=q_1-l_1\\ l'_{d_1+1}=q_{d_1+1}-l_{d_1+1} }}  |\hat{V}(l')|^2=0.
\end{equation}
Therefore, $\hat{V}(l)=0$ and 
we finish the proof.
\end{proof}
\section{Proof of Theorem  \ref{thmmain3} } \label{S6}
In order to prove Theorem \ref{thmmain3} , we only need to prove 
\begin{theorem}\label{thmmain31}
	Let  $d=d_1+d_2$ with $d_1\geq 2$ and $d_2\geq 1$. Assume that both $V$ and $Y$ are $(d_1,d_2)$ separable, namely, there exist $V_1, Y_1$ on $\Z^{d_1}$ and $V_2, Y_2$ on $\Z^{d_2}$ such that $V=V_1\oplus V_2$ and $Y=Y_1\oplus Y_2$.   Assume that $Y$ and $V$ are Fermi isospectral.   Then,  for any $j=1,2$, up to a constant, 
	$V_j$ and $Y_j$  are Floquet isospectral.
\end{theorem}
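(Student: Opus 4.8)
The plan is to turn separability into a product (resultant) structure for the Fermi polynomial, and then compare the two sides of the Fermi-isospectrality identity through unique factorization of Laurent polynomials. Write $z^{(1)}=(z_1,\dots,z_{d_1})$, $z^{(2)}=(z_{d_1+1},\dots,z_d)$, $Q_1=q_1\cdots q_{d_1}$ and $Q_2=q_{d_1+1}\cdots q_d$, and let $\nu_1(z^{(2)}),\dots,\nu_{Q_2}(z^{(2)})$ be the eigenvalues of $\mathcal{D}_{V_2}(z^{(2)})$ (respectively $\nu^Y_j(z^{(2)})$ for $\mathcal{D}_{Y_2}(z^{(2)})$). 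Since $V=V_1\oplus V_2$, on $W=W_1\times W_2$ the operator splits as $\mathcal{D}_V(z)=\mathcal{D}_{V_1}(z^{(1)})\otimes I+I\otimes \mathcal{D}_{V_2}(z^{(2)})$, so its eigenvalues are the sums of those of the two factors. Using \eqref{g60} this yields the product formula
\begin{equation}\label{prodformula}
\mathcal{P}_V(z,\lambda)=\prod_{j=1}^{Q_2}\mathcal{P}_{V_1}\bigl(z^{(1)},\lambda-\nu_j(z^{(2)})\bigr),
\end{equation}
and the analogous identity for $Y$. First I would record three elementary facts about $\mathcal{P}_{V_1}(\cdot,\mu)$ viewed as a Laurent polynomial in $z^{(1)}$: its top-degree coefficients are the universal constants $\kappa_m z_m^{\pm Q_1/q_m}$ of \eqref{g37}, independent of $V_1$ and $\mu$; since the discrete Laplacian has no on-site term, $\operatorname{tr}\mathcal{D}_{V_1}(z^{(1)})=Q_1[V_1]$, so the $\mu^{Q_1-1}$-coefficient of the one-variable polynomial $g_{V_1}(\mu):=\mathcal{P}_{V_1}(\cdot,\mu)|_{z^{(1),0}}$ equals $(-1)^{Q_1-1}Q_1[V_1]$; and the map $\mu\mapsto\mathcal{P}_{V_1}(\cdot,\mu)$ is injective, since for $\mu\neq\mu'$ a generic point of $F_{\mu}(V_1)$ lies off $F_{\mu'}(V_1)$.

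Next I would feed in the hypothesis. By Lemma \ref{le1}, Fermi isospectrality yields $\mathcal{P}_V(z,\lambda_0)=\mathcal{P}_Y(z,\lambda_0)$ for a single $\lambda_0$. Fix a Zariski-generic $z^{(2)}\in(\C^\star)^{d_2}$ so that the $\nu_j(z^{(2)})$ and $\nu^Y_j(z^{(2)})$ are all simple and none of the values $\lambda_0-\nu_j(z^{(2)})$, $\lambda_0-\nu^Y_j(z^{(2)})$ hits the single exceptional level $[V_1]$ (resp. $[Y_1]$) that arises only when $d_1=2$ in Theorem \ref{thm1}. For such $z^{(2)}$, Theorems \ref{thm2} and \ref{thm1} make every factor in \eqref{prodformula} irreducible, and by the injectivity above these factors are pairwise distinct. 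Since $\C[z_1^{\pm1},\dots,z_{d_1}^{\pm1}]$ is a unique factorization domain and the two products are equal, I obtain a bijection $\sigma$ with $\mathcal{P}_{V_1}\bigl(\cdot,\lambda_0-\nu_j\bigr)=u_j\,\mathcal{P}_{Y_1}\bigl(\cdot,\lambda_0-\nu^Y_{\sigma(j)}\bigr)$ for monomial units $u_j$; comparing the symmetric top terms $\kappa_m z_m^{\pm Q_1/q_m}$ present on both sides forces every $u_j=1$.

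It remains to identify the shift and conclude. By injectivity of $\mu'\mapsto\mathcal{P}_{Y_1}(\cdot,\mu')$ the matching defines a map $\Phi$ with $\mathcal{P}_{V_1}(\cdot,\mu)=\mathcal{P}_{Y_1}(\cdot,\Phi(\mu))$ and $\Phi\bigl(\lambda_0-\nu_j(z^{(2)})\bigr)=\lambda_0-\nu^Y_{\sigma(j)}(z^{(2)})$. As $z^{(2)}$ varies, $\mu$ fills an open subset of $\C$ on which $\Phi$ is analytic; a comparison of the leading and subleading coefficients of the one-variable identity $g_{V_1}(\mu)=g_{Y_1}(\Phi(\mu))$, using $\operatorname{tr}\mathcal{D}_W(\cdot)=Q_1[W]$ and $[V]=[Y]$ (Theorem \ref{key1}), forces $\Phi(\mu)=\mu-C_1$ with the single constant $C_1=[V_1]-[Y_1]$. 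Consequently $\mathcal{P}_{V_1}(z^{(1)},\mu)=\mathcal{P}_{Y_1}(z^{(1)},\mu-C_1)$ holds first on an open set and then, by polynomiality in $\mu$, for all $\mu$; that is, $V_1$ and $Y_1$ are Floquet isospectral up to the constant $C_1$. Simultaneously $\{\nu^Y_j(z^{(2)})\}=\{\nu_j(z^{(2)})+C_1\}$ for generic, hence all, $z^{(2)}$, which by Remark \ref{re3} says $V_2$ and $Y_2$ are Floquet isospectral up to a constant.

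The step I expect to be the main obstacle is the factorization/matching itself: one must guarantee that after specializing $z^{(2)}$ the $Q_2$ factors in \eqref{prodformula} are genuinely irreducible and pairwise distinct, which is exactly where Theorems \ref{thm2} and \ref{thm1} enter and where the two-dimensional exceptional level $[V_1]$ (for $d_1=2$) must be dodged by the genericity of $z^{(2)}$. The surrounding bookkeeping---triviality of the units $u_j$, the passage from generic to all $z^{(2)}$ and from the open $\mu$-set to all $\mu$, and pinning the single constant $C_1$ from the trace data---is then routine.
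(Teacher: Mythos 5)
Your argument is correct in outline, and it takes a genuinely different, more unified route than the paper. The paper splits the theorem into two halves: it first recovers $\tilde{\mathcal{P}}_{V_2}=\tilde{\mathcal{P}}_{Y_2}$ by degenerating along the curve \eqref{equ32} and extracting $\tilde{\mathcal{P}}_{V_2}(\tilde{z}_2,\lambda)h(\hat{z}_1)$ as the top coefficient as $|z_2|,\dots,|z_{d_1}|\to\infty$ (this is where $d_1\geq 2$ is used, see \eqref{equ30}), and then treats $V_1,Y_1$ by interchanging the roles when $d_2\geq 2$ and, when $d_2=1$, by exactly your tensor-product factorization (cf. \eqref{gj82}--\eqref{gj83}) along analytic eigenvalue branches $\lambda^l(z_d)$, with irreducibility (Theorems \ref{thm2} and \ref{thm1}) plus unique factorization matching the factors and the asymptotics \eqref{gj81} pinning the branch $l_0=1$. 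You instead run the factorization argument once, at a fixed Zariski-generic numerical $z^{(2)}$ and for every $d_2\geq 1$, and obtain both conclusions simultaneously: the matched factors give the statement for $V_1,Y_1$, and the induced eigenvalue relation $\nu^Y_{\sigma(j)}=\nu_j+C_1$ gives it for $V_2,Y_2$ with no asymptotic expansion and no case split on $d_2$. The price is the genericity bookkeeping (simple $\nu_j$, dodging the $d_1=2$ exceptional level $[V_1]$ of Theorem \ref{thm1}) and an injectivity statement for $\mu\mapsto\mathcal{P}_{V_1}(\cdot,\mu)$, which you assert via Fermi varieties but do not prove; it does hold, most cheaply because the total-degree-$(Q_1-1)$ part of $\tilde{\mathcal{P}}_{V_1}(z^{(1)},\mu)$ equals $([V_1]-\mu)\,S(z^{(1)})$ with $S(z^{(1)})=\sum_{n}h_1(z^{(1)})/\sum_j\rho^j_{n_j}z_j\not\equiv 0$, linear in $\mu$ (this is the computation \eqref{equ2} in the paper, applied in dimension $d_1$).

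One step as written is insufficient and needs repair: pinning $\Phi(\mu)=\mu-C_1$ from the leading and subleading coefficients of the one-variable identity $g_{V_1}(\mu)=g_{Y_1}(\Phi(\mu))$. First, affineness of $\Phi$ is not established by that identity alone; second, even granting $\Phi(\mu)=\omega\mu+c$, those two coefficients only force $\omega^{Q_1}=1$ together with one linear relation between $c$ and the averages, and every $Q_1$-th root of unity $\omega$ admits a compatible $c$, so $\omega=1$ is not forced. The repair is immediate and uses a multivariate coefficient instead of the $z^0$ one: for each attained $\mu$ you already have the full Laurent identity $\mathcal{P}_{V_1}(\cdot,\mu)=\mathcal{P}_{Y_1}(\cdot,\Phi(\mu))$, so equating the degree-$(Q_1-1)$ parts as above gives $([V_1]-\mu)S(z^{(1)})=([Y_1]-\Phi(\mu))S(z^{(1)})$, whence $\Phi(\mu)=\mu+[Y_1]-[V_1]$ on the open set of attained values, with no analyticity of $\Phi$ required; coefficient-wise polynomiality in $\mu$ then extends the identity to all $\mu$, as you say. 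With this substitution (which also makes your appeal to Theorem \ref{key1} a mere consistency check, since $[V_1]+[V_2]=[Y_1]+[Y_2]$ then follows), the proof closes correctly.
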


Let $d_1+d_2=d$ with $d_1\geq 2$ and $d_2\geq 1$.  Define
$$\Gamma_1=q_1\Z\oplus \cdots \oplus q_{d_1}\Z, \text{ and } \Gamma_2=q_{d_1+1}\Z \oplus \cdots \oplus q_{d_1+d_2}\Z.$$
Let $W_i$ be a fundamental domain  for $\Gamma_i$, $i=1,2$:
\begin{equation*}
W_1=\{(n_1,n_2,\cdots,n_{d_1})\in\Z^{d_1}: 0\leq n_j\leq q_{j}-1, j=1,2,\cdots, d_1\},
\end{equation*}
and
\begin{equation*}
W_2=\{(n_{d_1+1}, n_{d_1+2},\cdots,n_{d_1+d_2})\in\Z^{d_2}: 0\leq n_j\leq q_{j}-1, j=d_1+1,2,\cdots,d_1+ d_2\}.
\end{equation*}
For any $n=(n_1,n_2,\cdots, n_d)\in\Z^d$, denote by 
\begin{equation}\label{g63}
\tilde{n}_1=(n_1,n_2,\cdots,n_{d_1}), \tilde{n}_2=(n_{d_1+1},n_{d_1+2},\cdots,n_{d_1+d_2}).
\end{equation}

\begin{proof}[\bf Proof of Theorem \ref{thmmain31}]
	Recall \eqref{g21}.
	Without loss of generality, assume $[V_1]=[V_2]=[Y_1]=[Y_2]=0$ and thus  $[V]=[Y]=0$.
	By the assumption   that $V$ and $Y$ are Fermi isospectral, one has that there exists $\lambda_0\in\C$ such that for any $z\in( \C^{\star})^{d_1+d_2}$,
	\begin{equation*} 
	\mathcal{P}_V(z,\lambda_0)=\mathcal{P}_{Y}(z,\lambda_0).
	\end{equation*}
	Therefore,  	
	\begin{equation}\label{equ281}
	\tilde{\mathcal{P}}_V(z,\lambda_0)=\tilde{\mathcal{P}}_{Y}(z,\lambda_0).
	\end{equation}
	Denote by  $\hat{z}_1=(z_2,z_3,\cdots,z_{d_1})\in \C^{d_1-1}$. For any $\hat{z}_1 \in (\C^{\star})^{d_1-1}$ and $\lambda\in \C$,  let 
	$z_1=z_1(\lambda_0,\lambda,\hat{z}_1)$ be  the unique  solution of 
	\begin{equation}\label{equ32}
	\sum_{j=1}^{d_1}\left(z_j+\frac{1}{   z_j}\right) =\lambda_0-\lambda,
	\end{equation}
satisfying  $ |z_1|=|z_1(\lambda_0,\lambda,\hat{z}_1)|\to \infty$ as $| \sum_{j=2}^{d_1} z_j|\to \infty$ and $|z_j|\to\infty$, $j=2,\cdots, d_1$.
	
	Denote by  $\hat{W}_1=W_1\backslash \{(0,0,\cdots, 0)\}$.
	Since $q_1$, $q_2$, $\cdots, q_{d_1}$ are  relatively prime,  one has that 
	for any  $\tilde{n}_1=(n_1,n_2,\cdots,n_{d_1})\in \hat{W}_1$,  as  $| \sum_{j=2}^{d_1} z_j|\to \infty$ and $|z_2|, |z_3|,\cdots, |z_{d_1}|\to \infty$,
	\begin{equation}\label{equ29}
	\sum_{j=1}^{d_1}	\left(\rho_{n_j}^j z_j+\frac{1}{ \rho_{n_j}^j z_j} \right) =O(1)+ \sum_{j=2}^{d_1}( -\rho_{n_1}^1+ \rho_{n_j}^j) z_j,
	\end{equation}
	and 
	\begin{equation}\label{ge1}
	\sum_{j=2}^{d_1}( -\rho_{n_1}^1+ \rho_{n_j}^j) z_j \text{ is not identically zero}.
	\end{equation}
	Rewrite \eqref{equ32} as,  for $(n_1,n_2,\cdots, n_{d_1})=(0,0,\cdots,0)$,
	\begin{equation}\label{equ33}
	\sum_{j=1}^{d_1}	\left(\rho_{n_j}^j z_j+\frac{1}{ \rho_{n_j}^j z_j} \right) =\lambda_0-\lambda. 
	\end{equation}
	Let \begin{equation}\label{geh}
	h(z_2,z_3,\cdots,z_{d_1})= \left(\prod _{n\in \hat{W}_1}\left(\sum_{j=2}^{d_1}( -\rho_{n_1}^1+ \rho_{n_j}^j) z_j\right) \right)^{|W_2|},
	\end{equation}
	where $|W_2|$ is the cardinality  of $W_2$ ($|W_2|=\prod_{j={d_1+1}}^{d_1+d_2} q_j$).
	Clearly, $h$ is a homogeneous polynomial of $z_2,z_3,\cdots,z_{d_1}$ with degree $Q-|W_2|$.
	Let  $$\tilde{z}_2=(z_{d_1+1},z_{d_1+2},\cdots, z_{d_1+d_2})\in\C^{d_2}.$$
	
	For any $n=(\tilde{n}_1,\tilde{n}_2) \in\Z^{d_1+d_2}$  and  $n'=(\tilde{n}_1',\tilde{n}_2')\in\Z^{d_1+d_2}$ when  $\tilde{n}_1=\tilde{n}_1'=(0,0,\cdots,0) $, 
by \eqref{gb} and \eqref{g56}, one has that     
	\begin{equation}\label{g64}
	B_V(n;n')=\hat{V}(0,\cdots,0, \tilde{n}_2-\tilde{n}'_2)=\hat{V}_2(\tilde{n}_2-\tilde{n}'_2)=B_{V_2} (\tilde{n}_2;\tilde{n}_2'),
	\end{equation}
	and 
	by \eqref{A} and \eqref{equ33}, one has that 
		\begin{equation}\label{A1}
	A(n;n^\prime)=\left(\lambda_0- \lambda+\sum_{j=d_1+1}^d \left(\rho^j_{n_j}z_j+\frac{1}{\rho^j_{n_j} z_j} \right)\right)  \delta_{\tilde{n}_2,\tilde{n}_2^{\prime}}.
	\end{equation}
	By  Lemma \ref{lesep} and \eqref{equ29}-\eqref{A1}, one has that  as $|z_2|,|z_3|, \cdots, |z_{d_1}|\to \infty$  and $| \sum_{j=2}^{d_1} z_j|\to \infty$,
	\begin{equation}\label{equ30}
	\tilde{ \mathcal{P}}_V(z_1,\hat{z}_1,  \tilde{z}_2,\lambda_0)  = \tilde{ \mathcal{P}}_{V_2}(\tilde{z}_2,\lambda)  h(\hat{z}_1)+O(1)\left(\sum_{j=2}^{d_1}|z_j|^{Q-|W_2|-1}\right).
	\end{equation} 
	By \eqref{equ281} and  \eqref{equ30}, we have that
	for any $\tilde{z}_2$ and  $\lambda$,
	\begin{equation*}
	\tilde{ \mathcal{P}}_{Y_2}(\tilde{z}_2,\lambda)= 	\tilde{ \mathcal{P}}_{V_2}(\tilde{z}_2,\lambda).
	\end{equation*}
	This implies 
	$ V_2$ and $Y_2$ are Floquet isospectral.
 
		Now we are in a position to show that  	$ V_1$ and $Y_1$ are Floquet isospectral.
	Interchanging $V_1$ and $V_2$,  and $Y_1$ and $Y_2$, we see that  
 $V_1$ and $Y_1$ are Floquet isospectral  when  $d_2\geq 2$. Thus,  we only need to consider  the   case $d_2 =1$. 
Let $R>0$ be large enough and  set $\Omega=\{z_d\in\C:|z_d|> R\}$. 

Fixing $z_d$, solve the algebraic equation 
$	\tilde{\mathcal{P}}_{V_2}(z_d, \lambda) =0$.  By Lemma \ref{lesep},     there exist      solutions $\lambda^l(z_d)$, $l=1,2,\cdots, q_d$ such that  $\lambda^l(z_d)$ is analytic in $\Omega$ and 
\begin{equation}\label{gj81}
\lambda^l(z_d)= e^{2\pi \frac{l}{q_d} i} z_d+O(1) \text{ as } |z_d|\to \infty.
\end{equation}
Since 	$ V_2$ and $Y_2$ are Floquet isospectral, we know that  $\lambda^l(z_d)$, $l=1,2,\cdots, q_d$, are also the solutions of  the algebraic equation 
$	\tilde{\mathcal{P}}_{Y_2}(z_d, \lambda) =0$.
Using the fact that both $V$ and $Y$ are $(d_1,d_2)$ separable, one has that 
\begin{equation}\label{gj82}
\mathcal{P}_V(z_1,z_2,\cdots,z_{d-1},z_d^{q_2},\lambda_0)= \prod_{l=1}^{q_d}  \mathcal{P}_{V_1} (z_1,z_2,\cdots,z_{d-1}, \lambda^l(z_d)-\lambda_0),
\end{equation}
and 
\begin{equation}\label{gj83}
\mathcal{P}_Y(z_1,z_2,\cdots,z_{d-1},z_d^{q_2},\lambda_0)= \prod_{l=1}^{q_d}  \mathcal{P}_{Y_1} (z_1,z_2,\cdots,z_{d-1},\lambda^l(z_d)-\lambda_0).
\end{equation}
By \eqref{equ281},  \eqref{gj82}, \eqref{gj83}, Theorems  \ref{thm2} and \ref{thm1}, and the unique factorization theorem, one has that there is some $l_0\in \{1,2,\cdots, q_d\}$ such that 
\begin{equation}\label{gj84}
   \mathcal{P}_{V_1} (z_1,z_2,\cdots,z_{d-1}, \lambda^1(z_d)-\lambda_0)=  \mathcal{P}_{Y_1} (z_1,z_2,\cdots,z_{d-1},\lambda^{l_0}(z_d)-\lambda_0).
\end{equation}

By \eqref{gj81}, \eqref{gj84} and the  fact that  $q_l$, $l=1,2,\cdots, d$, are relatively prime, one  concludes that 
$l_0=1$. By \eqref{gj84}, we have that   for any $(z_1,z_2,\cdots,z_{d-1})\in( \C^{\star})^{d_1}=( \C^{\star})^{d-1}$ and $z_d\in \Omega$, 
\begin{equation}\label{gj85}
\mathcal{P}_{V_1} (z_1,z_2,\cdots,z_{d-1}, \lambda^1(z_d)-\lambda_0)=  \mathcal{P}_{Y_1} (z_1,z_2,\cdots,z_{d-1},\lambda^{1}(z_d)-\lambda_0).
\end{equation}
This implies that  for any $(z_1,z_2,\cdots,z_{d-1})\in( \C^{\star})^{d-1}$ and $ \lambda\in\C$, 
\begin{equation}\label{gj86}
\mathcal{P}_{V_1} (z_1,z_2,\cdots,z_{d-1},\lambda)=  \mathcal{P}_{Y_1} (z_1,z_2,\cdots,z_{d-1},\lambda),
\end{equation}
and hence $V_1$ and $Y_1$ are Floquet isospectral. We complete the proof.

\end{proof}

\section{Proof of Theorems \ref{thmmain5} and \ref{thmmain} }\label{S7}
\begin{proof}[\bf Proof of Theorem  \ref{thmmain5}]
	By  Theorem \ref{mainthm}, one concludes  that there exist functions $V_1$ on $\Z^{\tilde{d}}$ and $V_2$ on $\Z^{d-\tilde{d}}$ such that $V=V_1\bigoplus V_2$. Moreover, $V_2$ and the zero  potential are Floquet isospectral.  
	 By Remark \ref{re3}, 
	$\tilde{\mathcal{P}}_{V_2}(z,\lambda)=\tilde{\mathcal{P}}_{\bf 0}(z,\lambda)$  for any $\lambda$ and$z\in(\C^{\star})^{d-\tilde{d}}$, where ${\bf 0}$ is the zero function on $\Z^{d-\tilde{d}}$.
	Recall that   $\tilde{\mathcal{P}}_{V_2}(z,\lambda)$ and $\tilde{\mathcal{P}}_{\bf 0}(z,\lambda)$ are  polynomials in $\lambda$  with degree $Q_1=q_{\tilde{d}+1}q_{\tilde{d}+2}\cdots q_{d}$.
	Applying   Lemma \ref{lesep} to 	calculate the coefficients of  $\lambda^{Q_1-1}$ and $\lambda^{Q_1-2}$ in  both $\tilde{\mathcal{P}}_{V_2}(z,\lambda)$ and $\tilde{\mathcal{P}}_{\bf 0}(z,\lambda)$ (similar to \eqref{g2} and \eqref{g3}), one has that 
	\begin{equation*}
	[V_2]=0,
	\end{equation*}
	and 
	\begin{equation}\label{equ25}
 \sum_{0\leq l_j\leq q_j-1\atop{\tilde{d}+1\leq j\leq d} } |\hat{V}_2(l_{\tilde{d}+1}, l_{\tilde{d}+2},\cdots,l_{{d}})|^2=0, 
	\end{equation}
	and  hence $V_2$ is the zero function.
	This implies that $V$  only depends on the first $\tilde{d}$ variables.
\end{proof}
\begin{proof}[\bf Proof of Theorem \ref{thmmain}]

 By Corollary \ref{coro2},  $V$ is completely separable. By Lemma \ref{leSeparable1},  for any $l$    that  at least two of 
 $ {l}_{j},$ $j=1,2,\cdots, d$  are non-zero,  $ \hat{V}(l)=0$.  Therefore, in order to prove Theorem \ref{thmmain}, it suffices
to show  $\hat{V}(l)=0$ when at least two of $l_j$, $j=1,2,\cdots,d$ are zero.  
	Without loss of generality, assume $l_1=l_2=0$. 
	Applying Theorem \ref{key4} with the zero potential $Y$ and $d_1=2$, one has
	 \begin{equation}\label{g29}
	\sum_{ \substack { l'\in W\\ l'_1=0,l'_2=0 }}| \hat{V}(l')|^2= 0 .
	 \end{equation}
This implies
 for any $l_j\in\{0, 1,2\cdots, q_j-1\}$, $j=3,4,\cdots,d$,
  $$\hat{V}(l_1,l_2,l_3,l_4,\cdots,l_d)=\hat{V}(0,0,l_3,l_4,\cdots,l_d)=0.$$
	We finish the proof.
\end{proof}

\section*{Acknowledgments}

W. Liu was 
supported by NSF  DMS-2000345 and DMS-2052572.
I wish to thank   Pavel Kurasov for some useful discussions which resulted in the formulation of  Theorem \ref{thmmain5}.
I   also  wish to  thank  the referee for  their careful  reading of the manuscript  which greatly improved the exposition. 


 \bibliographystyle{abbrv} 

\bibliography{absence}

\begin{thebibliography}{10}

\bibitem{AIM16}
K.~Ando, H.~Isozaki, and H.~Morioka.
\newblock Spectral properties of {S}chr\"{o}dinger operators on perturbed
  lattices.
\newblock {\em Ann. Henri Poincar\'{e}}, 17(8):2103--2171, 2016.

\bibitem{battig1988toroidal}
D.~B{\"a}ttig.
\newblock {\em A toroidal compactification of the two dimensional
  {B}loch-manifold}.
\newblock PhD thesis, ETH Zurich, 1988.

\bibitem{bat1}
D.~B\"{a}ttig.
\newblock A directional compactification of the complex {F}ermi surface and
  isospectrality.
\newblock In {\em S\'{e}minaire sur les \'{E}quations aux {D}\'{e}riv\'{e}es
  {P}artielles, 1989--1990}, pages Exp. No. IV, 11. \'{E}cole Polytech.,
  Palaiseau, 1990.

\bibitem{batcmh92}
D.~B\"{a}ttig.
\newblock A toroidal compactification of the {F}ermi surface for the discrete
  {S}chr\"{o}dinger operator.
\newblock {\em Comment. Math. Helv.}, 67(1):1--16, 1992.

\bibitem{bktcm91}
D.~B\"{a}ttig, H.~Kn\"{o}rrer, and E.~Trubowitz.
\newblock A directional compactification of the complex {F}ermi surface.
\newblock {\em Compositio Math.}, 79(2):205--229, 1991.

\bibitem{dks}
N.~Do, P.~Kuchment, and F.~Sottile.
\newblock Generic properties of dispersion relations for discrete periodic
  operators.
\newblock {\em J. Math. Phys.}, 61(10):103502, 19, 2020.

\bibitem{eskin89}
G.~Eskin.
\newblock Inverse spectral problem for the {S}chr\"{o}dinger equation with
  periodic vector potential.
\newblock {\em Comm. Math. Phys.}, 125(2):263--300, 1989.

\bibitem{ERT84}
G.~Eskin, J.~Ralston, and E.~Trubowitz.
\newblock On isospectral periodic potentials in {${\bf R}\sp{n}$}.
\newblock {\em Comm. Pure Appl. Math.}, 37(5):647--676, 1984.

\bibitem{ERTII}
G.~Eskin, J.~Ralston, and E.~Trubowitz.
\newblock On isospectral periodic potentials in {${\bf R}\sp{n}$}. {II}.
\newblock {\em Comm. Pure Appl. Math.}, 37(6):715--753, 1984.

\bibitem{flm21}
J.~Fillman, W.~Liu, and R.~Matos.
\newblock Irreducibility of the {B}loch variety for finite-range
  {S}chr{\"o}dinger operators.
\newblock {\em arXiv preprint arXiv:2107.06447}, 2021.

\bibitem{fls}
L.~Fisher, W.~Li, and S.~P. Shipman.
\newblock Reducible {F}ermi surface for multi-layer quantum graphs including
  stacked graphene.
\newblock {\em Comm. Math. Phys.}, 385(3):1499--1534, 2021.

\bibitem{GKTBook}
D.~Gieseker, H.~Kn\"{o}rrer, and E.~Trubowitz.
\newblock {\em The geometry of algebraic {F}ermi curves}, volume~14 of {\em
  Perspectives in Mathematics}.
\newblock Academic Press, Inc., Boston, MA, 1993.

\bibitem{gki}
C.~S. Gordon and T.~Kappeler.
\newblock On isospectral potentials on tori.
\newblock {\em Duke Math. J.}, 63(1):217--233, 1991.

\bibitem{gkii}
C.~S. Gordon and T.~Kappeler.
\newblock On isospectral potentials on flat tori. {II}.
\newblock {\em Comm. Partial Differential Equations}, 20(3-4):709--728, 1995.

\bibitem{gui90}
V.~Guillemin.
\newblock Inverse spectral results on two-dimensional tori.
\newblock {\em J. Amer. Math. Soc.}, 3(2):375--387, 1990.

\bibitem{IM14}
H.~Isozaki and H.~Morioka.
\newblock A {R}ellich type theorem for discrete {S}chr\"{o}dinger operators.
\newblock {\em Inverse Probl. Imaging}, 8(2):475--489, 2014.

\bibitem{Kapi}
T.~Kappeler.
\newblock On isospectral periodic potentials on a discrete lattice. {I}.
\newblock {\em Duke Math. J.}, 57(1):135--150, 1988.

\bibitem{Kapii}
T.~Kappeler.
\newblock On isospectral potentials on a discrete lattice. {II}.
\newblock {\em Adv. in Appl. Math.}, 9(4):428--438, 1988.

\bibitem{kapiii}
T.~Kappeler.
\newblock Isospectral potentials on a discrete lattice. {III}.
\newblock {\em Trans. Amer. Math. Soc.}, 314(2):815--824, 1989.

\bibitem{ktcmh90}
H.~Kn\"{o}rrer and E.~Trubowitz.
\newblock A directional compactification of the complex {B}loch variety.
\newblock {\em Comment. Math. Helv.}, 65(1):114--149, 1990.

\bibitem{ksurvey}
P.~Kuchment.
\newblock An overview of periodic elliptic operators.
\newblock {\em Bull. Amer. Math. Soc. (N.S.)}, 53(3):343--414, 2016.

\bibitem{kvcpde20}
P.~Kuchment and B.~Vainberg.
\newblock On absence of embedded eigenvalues for {S}chr\"{o}dinger operators
  with perturbed periodic potentials.
\newblock {\em Comm. Partial Differential Equations}, 25(9-10):1809--1826,
  2000.

\bibitem{kv06cmp}
P.~Kuchment and B.~Vainberg.
\newblock On the structure of eigenfunctions corresponding to embedded
  eigenvalues of locally perturbed periodic graph operators.
\newblock {\em Comm. Math. Phys.}, 268(3):673--686, 2006.

\bibitem{ls}
W.~Li and S.~P. Shipman.
\newblock Irreducibility of the {F}ermi surface for planar periodic graph
  operators.
\newblock {\em Lett. Math. Phys.}, 110(9):2543--2572, 2020.

\bibitem{liu1}
W.~Liu.
\newblock Irreducibility of the {F}ermi variety for discrete periodic
  {S}chr\"{o}dinger operators and embedded eigenvalues.
\newblock {\em Geom. Funct. Anal.}, 32(1):1--30, 2022.

\bibitem{liujmp22}
W.~Liu.
\newblock Topics on {F}ermi varieties of discrete periodic {S}chr\"{o}dinger
  operators.
\newblock {\em J. Math. Phys.}, 63(2):Paper No. 023503, 13, 2022.

\bibitem{MT76}
H.~P. McKean and E.~Trubowitz.
\newblock Hill's operator and hyperelliptic function theory in the presence of
  infinitely many branch points.
\newblock {\em Comm. Pure Appl. Math.}, 29(2):143--226, 1976.

\bibitem{pete}
C.~Peters.
\newblock Algebraic {F}ermi curves (after {G}ieseker, {T}rubowitz and
  {K}n\"{o}rrer).
\newblock Number 189-190, pages Exp. No. 723, 239--258. 1990.
\newblock S\'{e}minaire Bourbaki, Vol. 1989/90.

\bibitem{shva}
W.~Shaban and B.~Vainberg.
\newblock Radiation conditions for the difference {S}chr\"{o}dinger operators.
\newblock {\em Appl. Anal.}, 80(3-4):525--556, 2001.

\bibitem{shi1}
S.~P. Shipman.
\newblock Eigenfunctions of unbounded support for embedded eigenvalues of
  locally perturbed periodic graph operators.
\newblock {\em Comm. Math. Phys.}, 332(2):605--626, 2014.

\bibitem{wa}
A.~Waters.
\newblock Isospectral periodic torii in dimension 2.
\newblock {\em Ann. Inst. H. Poincar\'{e} Anal. Non Lin\'{e}aire},
  32(6):1173--1188, 2015.

\end{thebibliography}

\end{document}